
\documentclass[12pt]{article}




\usepackage{times}



\topmargin 0.0cm
\oddsidemargin 0.2cm
\textwidth 16cm 
\textheight 21cm
\footskip 1.0cm

\usepackage{xcolor}
\usepackage{graphicx}
\usepackage{epstopdf}
\usepackage{hyperref}
\hypersetup{pdfstartview=} 
\usepackage{url}
\usepackage{amssymb}
\usepackage{amsmath}
\usepackage{amsfonts}
\usepackage{amsthm}
\usepackage[numbers]{natbib}
\usepackage[normalem]{ulem}
\usepackage{stmaryrd}

\newtheorem*{Theorem}{Theorem}
\newtheorem*{Lemma}{Lemma}

\renewcommand{\eqref}[1]{Eq. \ref{#1}}

\newif\ifcmnt
\cmnttrue
\ifdefined\cmntsoff\cmntfalse\fi

\ifcmnt
    \providecommand{\aucmnt}[1]{#1}
\else
    \providecommand{\aucmnt}[1]{}
\fi

\newcommand{\ignore}[1]{}

\newcommand{\bfABXY}{\mathbf{ABXY}}
\newcommand{\Vthresh}{v_{\mathrm{thresh}}}

\newenvironment{sciabstract}{%
\begin{quote} \bf}
{\end{quote}}




\newcounter{lastnote}


\title{Experimentally Generated Random Numbers Certified by the Impossibility of Superluminal Signaling}



\author
{Peter Bierhorst,$^{1\ast}$ Emanuel Knill,$^{1,6}$  Scott Glancy,$^{1}$\\
 Alan Mink,$^{2,3}$ Stephen Jordan,$^{2}$ Andrea Rommal,$^{4}$ Yi-Kai Liu,$^{2}$ \\
 Bradley Christensen,$^{5}$ Sae Woo Nam,$^{1}$  Lynden K. Shalm$^{1}$\\
\\
\normalsize{$^{1}$National Institute of Standards and Technology,}
\normalsize{Boulder 80305, CO, USA}\\
\normalsize{$^{2}$ National Institute of Standards and Technology,}
\normalsize{Gaithersburg 20899, MD, USA}\\
\normalsize{$^{3}$ Theiss Research,}
\normalsize{La Jolla, CA, 92037, USA}\\
\normalsize{$^{4}$ Muhlenberg College, Allentown, PA, 18104, USA}\\
\normalsize{$^{5}$ Department of Physics, University of Wisconsin, Madison, WI,
53706, USA}\\
\normalsize{$^{6}$Center for Theory of Quantum Matter, University of Colorado, Boulder, Colorado 80309, USA}\\
\\
\normalsize{$^\ast$ E-mail:  peter.bierhorst@nist.gov}
}


\date{}


\begin{document} 



\maketitle


\begin{sciabstract}
  Random numbers are an important resource for applications such as
  numerical simulation and secure communication. However, it is
  difficult to certify whether a physical random number generator is
  truly unpredictable. Here, we exploit the phenomenon of quantum
  nonlocality in a loophole-free photonic Bell test experiment for the
  generation of randomness that cannot be predicted within any
  physical theory that allows one to make independent measurement
  choices and prohibits superluminal signaling. To certify and
  quantify the randomness, we describe a new protocol that performs
  well in an experimental regime characterized by low violation of
  Bell inequalities. Applying an extractor function to our data, we
  obtained $256$ new random bits, uniform to within $0.001$.
\end{sciabstract}





Random numbers have many uses. A motivating application for our
experiment is a public randomness beacon that broadcasts certified
random bits at predetermined times \cite{fischer:2011}. For certain
applications, such as sampling and numerical simulation,
algorithmically generated pseudorandom strings are often
sufficient. However, the predictability of pseudorandom strings makes
them unsuitable for other applications such as secure
communication. For such purposes, the theoretical unpredictability of
quantum mechanical experiments make them good candidates for random
number generation \cite{bera:2016}.

A simple quantum random number generator may consist of a device that
measures a pure state of a two-level system in a basis that is not
aligned with the input state. To assert the presence of randomness in
the output of such a device, one must assume that the state and
measurement are properly characterized.  However, this assumption can
be compromised in a potentially undetectable manner. For example, if
predictable sources of noise infect the device, the output may become
increasingly predictable without inducing the failure of statistical
tests of randomness \cite{pironio:2013, miller:2014}. This issue has
inspired the development of the field of device-independent quantum
randomness generation in recent years \cite{pironio:2013, miller:2014,
colbeck:2011, pironio:2010, vazirani:2012, fehr:2013, chung:2014, nieto:2014,
   bancal:2014, thinh:2016}. In the device-independent
paradigm, randomness is generated through an experiment called a Bell
test \cite{BELL}. In its simplest form, a Bell test performs
measurements on an entangled system located in two physically
separated measurement stations, where at each station there are two
types of measurements that can be made. After multiple experimental
trials with varying measurement choices, if the measurement data
violates conditions known as ``Bell inequalities,'' then the data can
be certified to contain randomness under very weak assumptions.

Here, we report the generation of 256 new random bits uniform to
within 0.001 with a ``loophole-free'' Bell test, which notably is
characterized by high detection efficiency and space-like separation
of the measurement stations during each experimental trial.  The bits
are unpredictable assuming that (1) the choices of measurement
settings are independent of the experimental devices and pre-existing
classical information about them and (2) in each experimental trial,
the measurement outcomes at each station are independent of the
settings choices at the other station.  The first assumption is
ultimately untestable, but the premise that it is possible to choose
measurement settings independently of a system being measured is often
tacitly invoked in the interpretation of many scientific experiments
and laws of physics \cite{bell:1985}.  The second assumption can only
be violated if one admits a theory that permits sending signals faster
than the speed of light, given space-like separation of the stations.
We trust the recording and timing electronics to accurately verify
the space-like separation of the relevant events in the experiment,
and that the classical computing equipment used to process the data
operates according to specification.  Under these assumptions, the
output randomness is certified to be unpredictable with respect to a
real or hypothetical actor ``Eve'' in possession of the pre-existing
classical information and physically isolated from the devices while
they are under our control. The bits remain unpredictable to Eve if
she learns the settings at any time after her last interaction with
the devices.  If the devices are trusted, which is reasonable if we
built them, it may be the case that Eve has no pre-existing
information. The settings can then come from public randomness
generated externally at any time \cite{pironio:2013}.  Our framework
encompasses any quantum description of the system being measured, as
well as more general theoretical possibilities \cite{PRBOX}.

The only previous experimental production of certified randomness from
Bell test data was reported in the ground-breaking paper by Pironio et
al. \cite{pironio:2010}.  Their Bell test was implemented with ions in
two separate ion-traps, closing the detection loophole
\cite{pearle:1970} but without space-like separation. Indeed, Bell
tests achieving space-like separation without leaving other
experimental loopholes open have been performed only recently
\cite{hensen:2015, shalm:2015, giustina:2015,rosenfeld:2016}. Under
more restrictive assumptions than ours, the maximum amount of
randomness in principle available in the data of Pironio et al. was
quantified as $42$ bits with an error parameter of $0.01$.  However,
they did not extract a uniformly distributed bit string from their
data set. 

We generated randomness using a photonic loophole-free Bell test,
illustrated in Fig. \ref{expschematic}.  The experiment consisted of a
source of entangled photons and two measurement stations named
``Alice'' and ``Bob''.  During an experimental trial, at each station
a random choice was made between two measurement settings labeled 0
and 1, after which a measurement outcome of detection (+) or
nondetection (0) was recorded. Each station's choice was space-like
separated from the other station's measurement event. For trial $i$,
we model Alice's settings choices with the random variable $X_i$ and
Bob's with $Y_i$, both of which take values in the set $\{0,1\}$.
Alice's and Bob's measurement outcome random variables are
respectively $A_i$ and $B_i$, both of which take values in the set
$\{\text{+},0\}$. When referring to a generic single trial, we omit
indices. With this notation, a general Bell inequality for our
scenario can be expressed in the form \cite{BBP}
\begin{equation}\label{e:genbellineq}
\sum_{abxy}s^{ab}_{xy}\mathbb{P}(A=a,B=b|X=x,Y=y) \le \beta,
\end{equation}
where the $s^{ab}_{xy}$ are fixed real coefficients indexed by
$a,b,x,y$ that range over all possible values of $A,B,X,Y$.  The upper
bound $\beta$ is required to be satisfied whenever the
settings-conditional outcome probabilities are induced by a model
satisfying ``local realism'' (LR). LR distributions, which cannot be
certified to contain randomness, are those for which
$\mathbb{P}(A=a,B=b|X=x,Y=y)$ is of
the form $\sum_\lambda \mathbb{P}(A=a|X=x,
\Lambda=\lambda)\mathbb{P}(B=b|Y=y,
\Lambda=\lambda)\mathbb{P}(\Lambda=\lambda)$ for a random variable
$\Lambda$ representing local hidden variables. The Bell inequality is
non-trivial if there exists a quantum-realizable distribution that can
violate the bound $\beta$.

\begin{figure}
\begin{center}
\includegraphics{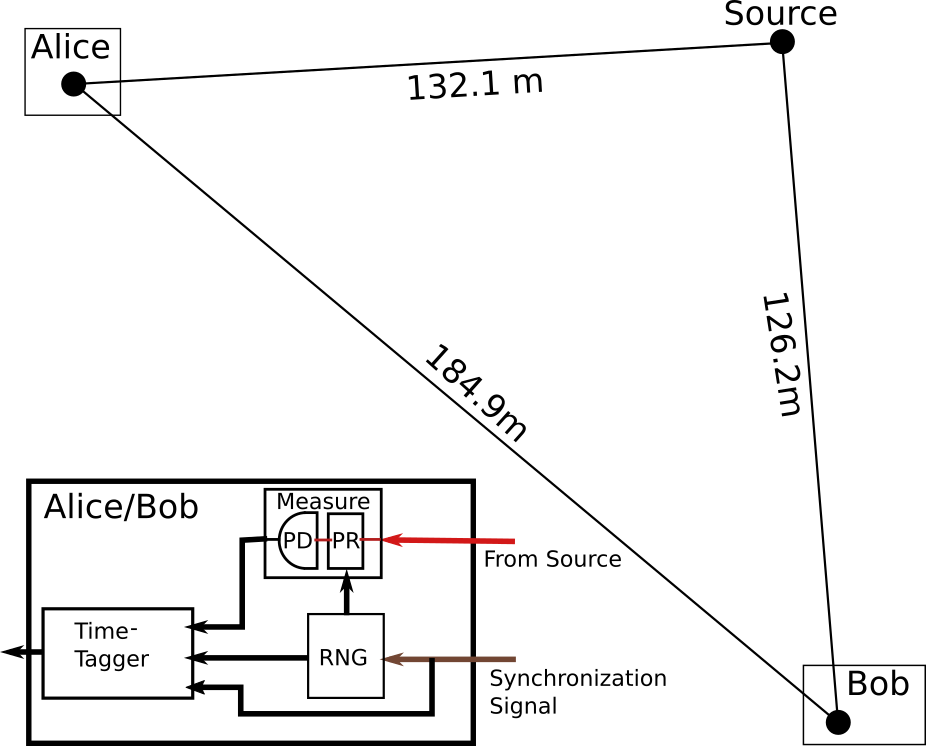}
\end{center}
\caption{Schematic layout of the loophole-free Bell test
  \cite{shalm:2015}. The setup consists of a source located at the
  corner of an ``L'' and two measurement stations, Alice and Bob, at
  the ends of the ``L''. The source generates entangled pairs of
  photons that travel through fiber optic cables to the respective
  measurement stations. At each station, schematically shown in the
  box labeled Alice/Bob, a random number generator (RNG) governs
  polarization rotators (PR) that implement one of two possible
  measurement settings prior to detection or nondetection of an
  arriving photon by a photon detector (PD), achieving a system
  efficiency of about $75 \,\%$. Each station's measurement setting,
  outcome, and synchronization signal are then recorded by its
  timetagger. See Ref.~\cite{shalm:2015} for details.}
\label{expschematic}
\end{figure}


Experimental violations of \eqref{e:genbellineq} indicate the presence
of randomness in the data \cite{bera:2016}. To quantify randomness
with respect to Eve, we represent Eve's initial classical information
by a random variable $E$.  We formalize the assumption that
measurement settings can be generated independently of the system
being measured and Eve's information with the following condition:
\begin{equation}\label{e:mtunifsettings}
\mathbb{P}(X_i=x,Y_i=y|E=e,\text{past}_i)=\mathbb{P}(X_i=x,Y_i=y)=\frac{1}{4} \quad \forall x,y,e,
\end{equation}
where $\text{past}_i$ represents events in the past of the $i$'th
trial, specifically including the trial settings and outcomes for
trial 1 through $i-1$. Our other assumption, that measurement outcomes
are independent of remote measurement choices, is formalized as
follows:
\begin{eqnarray}\label{e:mtnosig}
      \mathbb{P}(A_i=a|X_i=x,Y_i=y,E=e, \text{past}_i) &=& \mathbb{P}(A_i=a|X_i=x,E=e, \text{past}_i)\notag\\
      \mathbb{P}(B_i=b|X_i=x,Y_i=y,E=e, \text{past}_i) &=& \mathbb{P}(B_i=b|Y_i=y,E=e, \text{past}_i)\quad \forall x,y,e.
\end{eqnarray}
These equations are commonly referred to as the ``non-signaling''
assumptions, although they are often stated without the conditionals
$E$ and $\text{past}_i$.  Our space-like separation of choices and
remote measurements provide assurance that the experiment obeys
Eqs.~\ref{e:mtnosig}.  We make no other assumptions on the physics of
the devices used in the experiment, but remark that if one constrains
the devices to quantum physics, constraints stronger than
non-signaling are possible \cite{navascues:2008}.

Given Eqs.~\ref{e:mtunifsettings} and \ref{e:mtnosig}, our protocol
produces random bits in two sequential parts. For the first part,
``entropy production'', we implement $n$ trials of the Bell test, from
which we compute a statistic $V$ related to a Bell inequality
(\eqref{e:genbellineq}). $V$ quantifies the Bell violation and
determines whether or not the protocol passes or aborts.  If the
protocol passes, we can certify an amount of randomness in the outcome
string even conditioned on the setting string and $E$.  In the second
part, ``extraction,'' we process the outcome string into a shorter
string of bits whose distribution is close to uniform. We used our
customized implementation of the Trevisan extractor
\cite{trevisan:2001} derived from the framework of Mauerer, Portmann
and Scholz \cite{mauerer:2012} and the associated open source code. We
call this the TMPS algorithm, see Supplementary Text (ST)
\ref{st:trevisan} for details.

We developed a new method for the entropy production part of our
protocol, as previous methods
\cite{pironio:2013,miller:2014,colbeck:2011,pironio:2010,vazirani:2012,fehr:2013,chung:2014,coudron:2014,arnon:2016}
are ineffective in our experimental regime (ST \ref{st:previous}),
which is characterized by a small per-trial violation of Bell
inequalities. Recent papers that explore how to effectively certify
randomness from a wider range of experimental regimes assume that
measured states are independent and identically distributed (i.i.d.)
or that the regime is asymptotic
\cite{nieto:2014,bancal:2014,thinh:2016, miller2:2014}.  Our method,
which does not require these assumptions, builds on the
Prediction-Based Ratio (PBR) method for rejecting LR
\cite{zhang:2011}.  Applying this method to training data (see below),
we obtain a real-valued ``Bell function'' $T$ with arguments $A,B,X,Y$
that satisfies $T(A,B,X,Y) > 0$ with expectation $\mathbb{E}(T) \leq 1$ for any
LR distribution satisfying Eq.~\ref{e:mtunifsettings}. From $T$ we
determine the maximum value $1+m$ of $\mathbb{E}(T)$ over all distributions
satisfying Eqs.~\ref{e:mtunifsettings} and~\ref{e:mtnosig}, where we
require that $m>0$.  Such a function $T$ induces a Bell inequality
(\eqref{e:genbellineq}) with $\beta=4$ and $s^{ab}_{xy}= T(a,b,x,y)$.
Define $T_i=T(A_i,B_i,X_i,Y_i)$ and $V=\prod_{i=1}^nT_i $. If the experimenter observes a value of $V$
larger than $1$, this indicates a violation of the Bell inequality and
the presence of randomness in the data.  The randomness is quantified
by the following theorem, proven in the ST \ref{st:ept}. Below, we
denote all of the settings of both stations with
${\bf XY} = X_1Y_1X_2Y_2...X_nY_n$, and other sequences such as
$\mathbf{AB}$ and $\mathbf{ABXY}$ are similarly interleaved over $n$
trials.

\medskip
\medskip

\noindent{\it Entropy Production Theorem.}  Suppose $T$ is a Bell function
satisfying the above conditions. Then in an experiment of $n$ trials
obeying Eqs.~\ref{e:mtunifsettings} and~\ref{e:mtnosig}, the following
inequality holds for all $\epsilon_{\text{p}} \in (0,1)$ and
$\Vthresh$ satisfying $1\le \Vthresh \le (1+(3/2)m)^{n}\epsilon_{\mathrm{p}}^{-1}$:
\begin{equation}\label{e:theorem}
\mathbb{P}_e\left(\mathbb{P}_e({\bf AB}|{\bf XY})> \delta \text{ AND } V\ge \Vthresh \right)\le \epsilon_{p}
\end{equation}
where $\delta = [1+(1-\sqrt[n]{\epsilon_{\text{p}}\Vthresh})/(2m)]^n$
and $\mathbb{P}_e$ denotes the probability distribution conditioned on the
event $\{E=e\}$, where $e$ is arbitrary. The expression
$\mathbb{P}_e({\bf AB}|{\bf XY})$ denotes the random variable that takes the
value $\mathbb{P}_{e}(\mathbf{AB}=\mathbf{ab}|\mathbf{XY}=\mathbf{xy})$ when
$\mathbf{ABXY}$ takes the value $\mathbf{abxy}$.

\medskip
\medskip

\noindent In words, the theorem says that with high probability, if
$V$ is at least as large as $\Vthresh$, then the output $\mathbf{AB}$
is unpredictable, in the sense that no individual outcome
$\{\mathbf{AB}=\mathbf{ab}\}$ occurs with probability higher than
$\delta$, even given the information
$\{\mathbf{XY}E=\mathbf{xy}e\}$. The theorem supports a protocol that
aborts if $V$ takes a value less than $\Vthresh$, and passes
otherwise. In order for the guarantee of the theorem to be meaningful,
it is necessary to have a lower bound $\kappa$ on
$\mathbb{P}(\text{pass}) = \mathbb{P}(V\geq \Vthresh)$. While we
cannot be sure of such a lower bound, an assumption that the
probability of passing exceeds some positive value is necessary,
because for any implementation of the protocol there is always a
completely predictable LR theory with positive passing probability,
however small.  If $\kappa$ were $1$, then $-\log_2(\delta)$ would be
a so-called ``smoothed min-entropy'', a quantity that characterizes
the number of uniform bits of randomness that are in principle
available in $\mathbf{AB}$ \cite{trevisan:2000, renner:2006}.  We show
in the ST~\ref{st:T} that for constant $\epsilon_{\text{p}}$ and
$\kappa$, $-\log_{2}(\delta)$ is proportional to the number of trials.
How many bits we can actually extract depends on $\kappa$ and
$\epsilon_{\text{fin}}$, the output's maximum allowed distance from
uniform.

To extract the available randomness in ${\bf AB}$, we use the TMPS
algorithm to obtain an extractor, specifically a function $\text{Ext}$
that takes as input the string ${\bf AB}$ and a length $d$ ``seed''
bit string ${\bf S}$, where ${\bf S}$ is uniform and independent of
${\bf ABXY}$. Its output is a length $t$ bit string. Note that ${\bf
  S}$ can be obtained from $d$ additional instances of the random
variables $X_i$, so \eqref{e:mtunifsettings} ensures the needed
independence and uniformity conditions on ${\bf S}$.  In order for the
output to be within a distance $\epsilon_{\text{fin}}$ of uniform
independent of ${\bf XY}$ and $E$, the entropy production and
extractor parameters must satisfy the constraints given in the next
theorem, proven in the ST \ref{st:pst}.  The measure of distance used
is the ``total variation distance'', expressed by the left-hand side
of Eq.~\ref{e:mtfinal} in the theorem.

\medskip\medskip

\noindent{\it Protocol Soundness Theorem.}
Let $0<\epsilon_{\text{ext}}<1$. Suppose that the protocol parameters satisfy 
\begin{equation}\label{e:mttrev1}
t+4\log_2t \le -\log_2 \delta + \log_2 \kappa +5\log_2 \epsilon_{\text{ext}} -11
\end{equation}
with $\kappa\leq \mathbb{P}(\text{pass})$. Then the
function $\text{Ext}$ obtained by the TMPS algorithm satisfies
\begin{multline}\label{e:mtfinal}
  \frac{1}{2}\sum_{z,{\bf xyes}} \Big|\mathbb{P}\big(\text{Ext}({\bf
    AB},{\bf S})=z,{\bf XYES}={\bf
    xyes}|\text{pass}\big)-\mathbb{P}_{\text{unif}}(Z=z)\mathbb{P}\big({\bf
    XYE=xye}|\text{pass}\big)\mathbb{P}_{\text{unif}}({\bf S}={\bf s})\Big|\\ \le
  \epsilon_{\text{fin}}=\epsilon_{\text{p}}/\kappa+\epsilon_{\text{ext}},
\end{multline}
where $\mathbb{P}_{\text{unif}}$ denotes the uniform probability distribution.

\medskip
\medskip

\noindent The number of seed bits $d$ required satisfies
$d = O(\log(t)\log(nt/\epsilon)^{2})$. An explicit upper bound on
$d$ for our extractor implementation is given in the ST
\ref{st:trevisan}.


As the primary demonstration of our protocol, we applied it to the
main data set analyzed in \cite{shalm:2015}, which is titled ``XOR 3''
and consists of a total of $182,161,215$ trials, acquired in
$30\,\text{min}$ of running the experiment, improving on the
approximately one month duration of data acquisition reported in
Ref.~\cite{pironio:2010}. Before starting the protocol, we set aside
the first $5\times 10^{7}$ trials as training data, which we used to
choose parameters needed by the protocol.  With the training data
removed, the number $n$ of trials used by the protocol was
$132,161,215$. We used the training data to determine a Bell function
$T$ with statistically strong violation of LR on the training data
according to the PBR method \cite{zhang:2011}; see ST \ref{st:T}. The
function $T$ obtained is given in Table \ref{t:PBR}. A sample of $n$
i.i.d. trials from the distribution inferred from the training data
would have an approximate $0.95$ probability for $V$
to exceed $1.66\times 10^6$.  Based on this estimate, we chose
$\Vthresh=1.66\times 10^{6}$ and then numerically studied the
constraints on the number $t$ of bits extracted and the final error
$\epsilon_{\text{fin}}$ as a function of $\epsilon_{\text{p}}$,
$\epsilon_{\text{ext}}$, and $\kappa$. Based on this study we chose a
benchmark goal of $t=256$ and $\epsilon_{\text{fin}}=0.001$, with the
constraints satisfied for $\epsilon_{\text{p}}=3.1797 \times 10^{-4}$,
$\epsilon_{\text{ext}}=3.533 \times 10^{-5}$, and $\kappa=0.33$. We
chose $\epsilon_{\text{p}}$ and $\epsilon_{\text{ext}}$ in the ratio
$9{:}1$, which was generally found to be near-optimal when numerically
maximizing $t$ in \eqref{e:mttrev1} for fixed values of
$\epsilon_{\text{fin}}$. We chose $\kappa$ to be safely below $0.95$
(the estimate of the probability to exceed $\Vthresh$ based on the
training data) so that the protocol would be more robust against
drifts in trial statistics or the possibility of mid-experiment
equipment malfunction. For commissioning purposes, we examined how the
protocol behaves in six earlier runs of the experiment
\cite{shalm:2015}. Including XOR 3 and the two blind data sets
described below, the value of $\Vthresh$ chosen in the same manner is
exceeded seven times, suggesting that our choice of $\kappa$ is
reasonable. 

\begin{table}
  \centering\caption{The Bell function $T$ obtained from the training
    data. We used a maximum likelihood method to infer a non-signaling
    distribution based on the raw counts of the training trials,
    namely the first $5\times 10^{7}$ trials of data set XOR 3. We then
    determined the function $T$ that maximizes $\mathbb{E}(\ln T)$
    according to this distribution, subject to the constraints that
    $\mathbb{E}(T)_{LR}\le 1$ for all LR distributions and
    $T(0,0,x,y)=1$ for all $x,y$. The latter constraint
      improves the signal-to-noise for our data. The function $T$
    yields $m=0.0120275$, and $\mathbb{E}(T) = 1.0000003928$ for the
    non-signaling distribution inferred from the training data. One
    can also interpret the numbers below as the coefficients
    $s^{ab}_{xy}$ in \eqref{e:genbellineq}, which defines a Bell
    inequality with $\beta=4$. The values of $T$ are rounded down at
    the tenth digit.}
\label{t:PBR}
 \begin{tabular}{ r|c|c|c|c| }
 \multicolumn{1}{r}{}
  &  \multicolumn{1}{c}{$ab=\text{++}$}
 &  \multicolumn{1}{c}{$ab=\text{+}0$}
 &  \multicolumn{1}{c}{$ab=0\text{+}$} 
   &  \multicolumn{1}{c}{$ab=00$}
\\
  \cline{2-5}
   $xy=00$&   1.0244479364  &  0.9643897947  &  0.9638375026  & 1 \\
 \cline{2-5}
   $xy=01$&   1.0315040078  &  0.9393895435  &  0.9958939908  & 1 \\
 \cline{2-5}
   $xy=10$&   1.0317342738  &  0.9955719750  &  0.9399418138  & 1 \\
 \cline{2-5}
   $xy=11$&   0.9123069953  &  1.0044279882  &  1.0041059756  & 1 \\
 \cline{2-5}
 \end{tabular}
\end{table} 

Running the protocol on XOR 3 with these parameters, the running
product $\prod_{i=1}^cT_i$ exceeded $\Vthresh$ at trial number
$c=67,173,533$. At this point it is consistent with
Eqs.~\ref{e:mtunifsettings} and \ref{e:mtnosig} to change all outcomes
to $0$ for the remainder of the run, ensuring that the remaining
values of $T_{i}$ are $T_{i}=1$. This is justified because our assumptions allow for Alice
  and Bob to cooperatively make arbitrary changes to the experiment in
  advance of a trial based on the past, which includes the current
  running product. Turning off the detectors to guarantee outcomes of
  $0$ is one such change.  In principle there was sufficient time (at
  least $5\,\mu s$) for the necessary communication to take place
  after the previous trial. We
thereby preserve randomness accumulated at trial $c$ even if
statistical fluctuations or experimental drift would otherwise cause
$V$ to be less than $\prod_{i=1}^cT_i$.  In the ST~\ref{st:ept},
soundness of this procedure is established by proving a
past-parametrized version of the Entropy Production Theorem.  Applying
the extractor to the resulting output string ${\bf AB}$ with a seed of
length $d=73,947$, we extracted $256$ bits, certified to be uniform to
within $0.001$. They are, in hexadecimal form:
\begin{equation*}
\text{D731F577BC44F4993E28A84E44EEBD7824C09D203772F876F67D13D3C974FBC2}
\end{equation*}

The largest running product of the $T_i$'s observed during the
protocol without changing outcomes to $0$ was $2.76 \times 10^9$. If
we had chosen this value for $\Vthresh$, then $256$ bits
could have been extracted to within $0.001$ of uniform for any value
of $\kappa$ exceeding $4.85\times10^{-4}$. Figure \ref{f:XOR3}
displays the nominal smoothed min-entropy
  $-\log_{2}(\delta)$ and extractable bits
for alternative choices of $\epsilon_{\text{fin}}$.

\begin{figure}\centering
\includegraphics{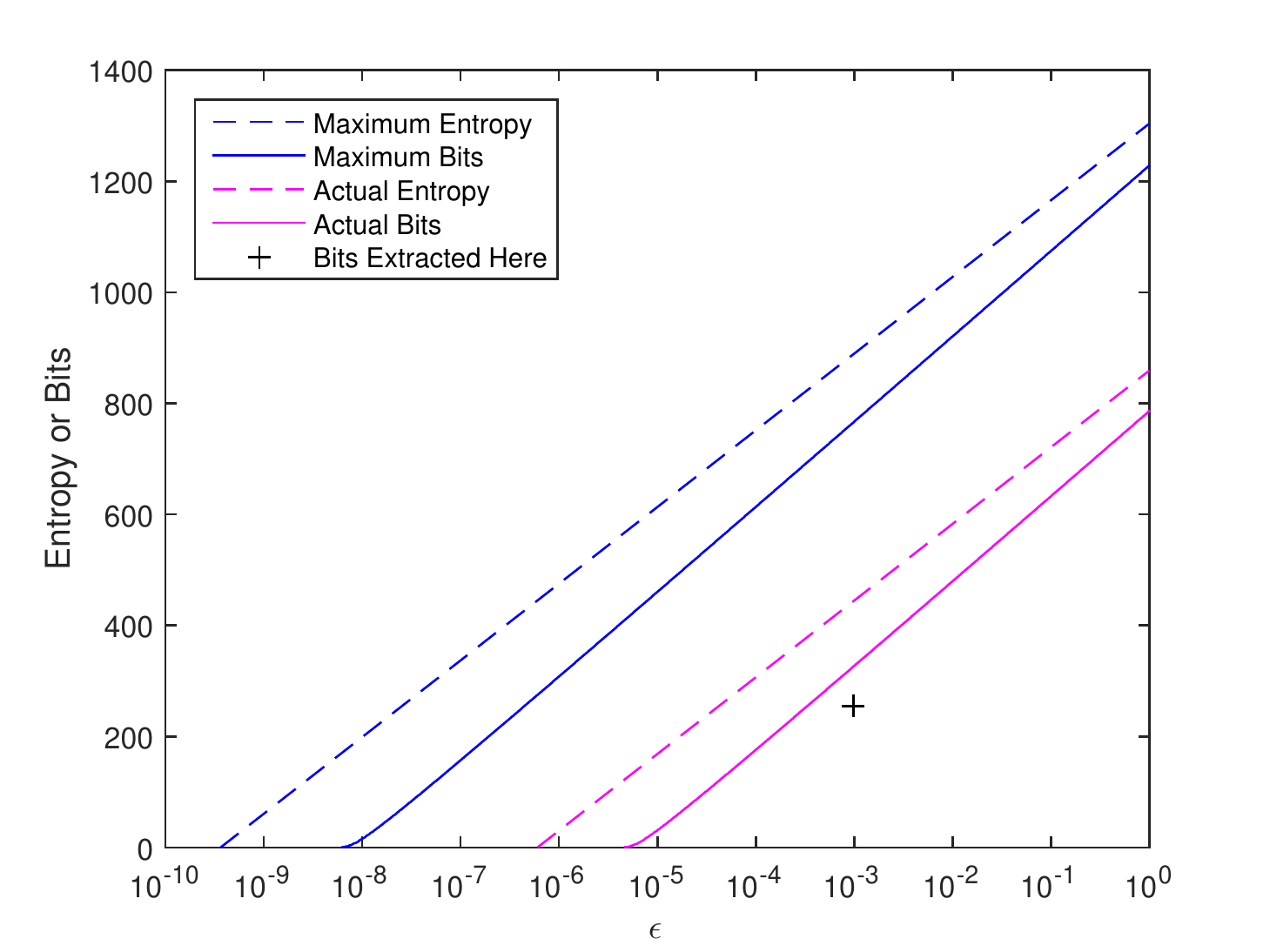}
\caption{ Entropy and extractable bits as a function of error for data
  set XOR 3.  The figure shows the tradeoff between final error and
  number of extractable bits.  The horizontal axis shows the error
  $\epsilon$, which is $\epsilon=\epsilon_{\text{p}}$ for the dashed
  curves and $\epsilon=\epsilon_{\text{fin}}$ for the solid curves.
  The vertical axis shows $-\log_2(\delta)$ (referred to as
  ``entropy'', a bound on the available number of random bits) for the
  dashed curves and $t$ (the number of bits extractable by the TMPS
  algorithm) for the solid curves. The dashed curves are based on the
  Entropy Production Theorem and the solid ones on the Protocol
  Soundness Theorem. For the solid curves, we set $\kappa=1$,
  $\epsilon_{\text{p}}=0.9\,\epsilon_{\text{fin}}$, and
  $\epsilon_{\text{ext}}=0.1\,\epsilon_{\text{fin}}$. The upper curves
  labeled by ``maximum'' show the maximum number of extractable bits,
  obtained if we had set $\Vthresh=2.76\times 10^9$, which is the
  maximum running product of the $T_i$. The lower curves labeled by
  ``actual'' use $\Vthresh=1.66\times 10^6$, which we chose before
  running the protocol. The ``+'' symbol corresponds to
  $\epsilon_{\text{fin}}=0.001$ and 256 extracted bits, and lies off
  the lower solid curve because this value of $\epsilon_{\text{fin}}$
  was calculated with $\kappa=0.33< 1$.}
\label{f:XOR3}
\end{figure}

The data set XOR 3 was previously subject to statistical analysis
\cite{shalm:2015}, and hence the above protocol was run with advance
knowledge of its features, in particular, that it performed well as a
test against LR. However, two ``blind'' data sets, ``Blind 1'' and
``Blind 2'' were recorded as part of the original experiment
\cite{shalm:2015} and saved for future analysis. After the development
of the protocol of this paper, these data sets were analyzed for the
first time according to the new methods, and the results are
summarized in Figure \ref{f:blinds}. Further details on all data
  sets considered, analyses applied and results are in the
  ST \ref{st:actual}.

\begin{figure}\centering
\includegraphics[scale=1]{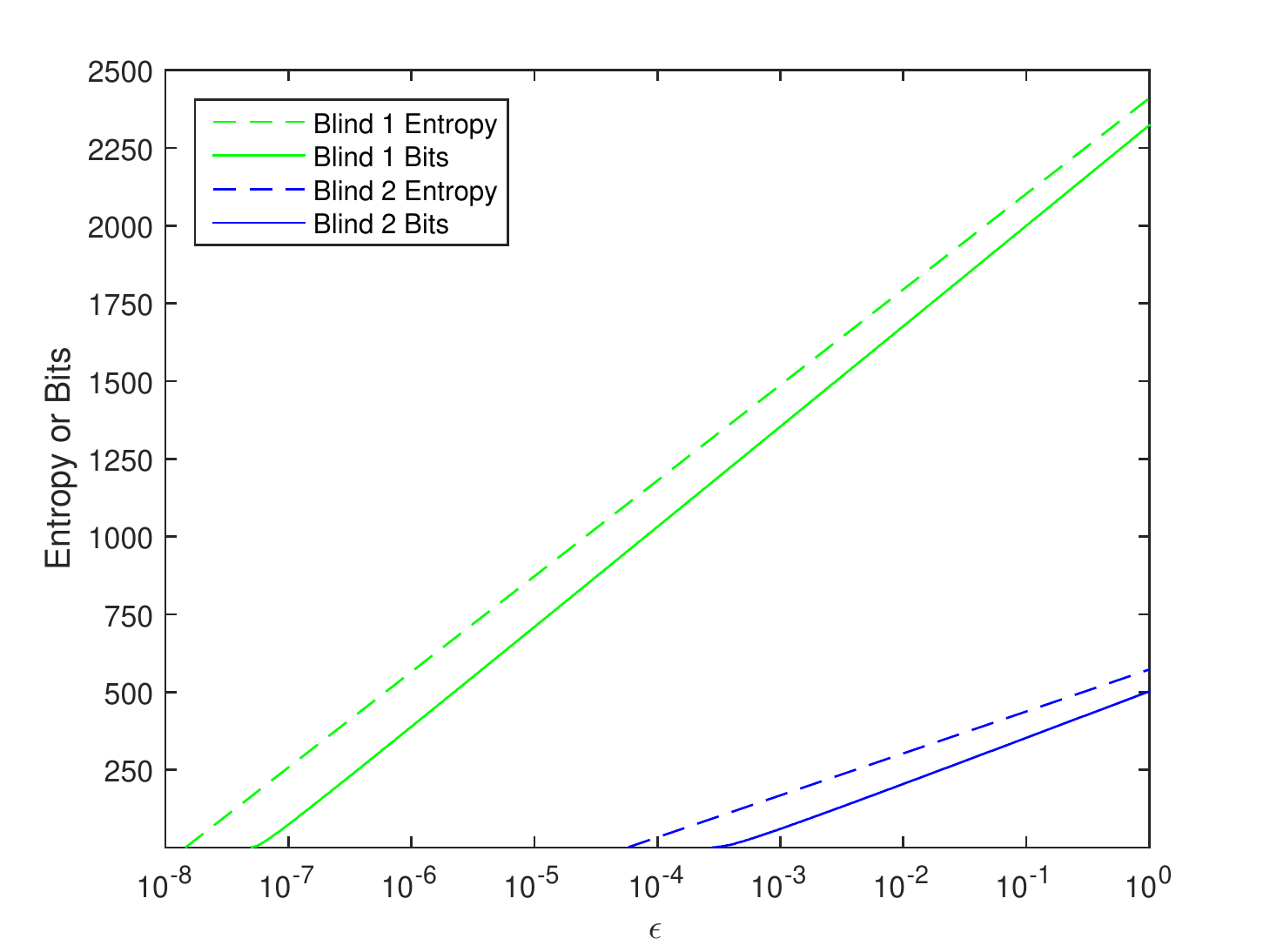}
\caption{Entropy and extractable bits as a function of error for
    Blind 1 and 2.  The figure shows the tradeoff between final error
    and number of extractable bits for the blind data sets. The axes and
    distinction between dashed and solid curves are as in
    Fig.~\ref{f:XOR3}, but only the curves for $\Vthresh$
    given by the maximum observed running product of the $T_i$'s is
    shown. Data sets Blind 1 and Blind 2 have parameters $m=0.00540,
    n=306,464,574, \Vthresh=6.88\times10^7$ and $m=0.01231,
    n=162,837,253, \Vthresh=17,528$, respectively. We did not
    explicitly extract bits for these data sets.}
\label{f:blinds}
\end{figure}

In conclusion, we have demonstrated a protocol that extracts
randomness from experimentally feasible low-violation photonic
experiments. We obtained $256$ new near-uniform random bits from
$1.32\times 10^{8}$ trials. In the process we used $2.64\times 10^{8}$
uniform bits to choose the settings and $7.34\times 10^{4}$ for the
extractor seed. Because the extractor used is a ``strong'' extractor,
the seed bits are still uniform conditional on passing, so they can be
recovered at the end of the protocol for uses elsewhere. However, this
is not the case for the settings-choice bits because the probability
of passing is less than $1$. To reduce the entropy used for the
settings, our protocol can be modified to use highly biased settings
choices \cite{pironio:2010}. It is also helpful to observe that the
output bits are unknown to external observers such as Eve even if the
settings and seed bits are public \cite{pironio:2013,bancal:2014}; the
only requirement is that the experiment's physical isolation ensures
that these public bits are random relative to the experimental devices
before they are used and, if Eve ever had access to the devices, to
Eve at the time of access.  In particular, if there never was a
physical connection to Eve, there is no restriction on settings and
seed bit knowledge.  When applied in this way, the protocol is a
method for private randomness generation.

For future work, we hope to take advantage of the adaptive
capabilities of the Entropy Production Theorem (ST~\ref{st:ept}) to
dynamically compensate for experimental drift during run time, and in
view of advances toward practical quantum computing it is desirable to
study the protocol in the presence of quantum side information. We
also look forward to technical improvements in experimental equipment
for larger violation and higher trial rates. These will enable faster
generation of random bits with lower error and support the use of
biased settings choices.

\medskip

\paragraph{Acknowledgments}
This work is a contribution of the National Institute of Standards and
Technology and is not subject to U.S. copyright. We thank Carl
  Miller and Kevin Coakley for helpful discussions and feedback on the
  manuscript.

\newpage


\begin{center}
{\Large Experimentally Generated Random Numbers Certified}\\
{\Large by the Impossibility of Superluminal Signaling}\\
{\large (Supporting Online Material)}\\
Peter Bierhorst, Emanuel Knill,  Scott Glancy,\\
 Alan Mink, Stephen Jordan, Andrea Rommal, Yi-Kai Liu, \\
 Bradley Christensen, Sae Woo Nam,  Lynden K. Shalm\\
\end{center}

\def\thesection{S} 

After preliminaries to establish notation and summarize needed
properties of total variation distance and non-signaling
  distributions in \ref{st:prelim}, we give the proof of the Entropy
  Production Theorem in \ref{st:ept}. We explain how we chose the
  Bell function $T$, whose product determines whether we obtained
  the desired amount of randomness, in \ref{st:T}. We then discuss the
  parameters of the extractors obtained by the TMPS algorithm
  (\ref{st:trevisan}) and prove the Protocol Soundness Theorem
  (\ref{st:pst}). Details on how we analyzed the experimental data
  sets are in \ref{st:actual}. Justification for our claim that
  previous methods do not obtain any randomness from our weakly
  violating data is given in \ref{st:previous}.

\subsection{Preliminaries}
\label{st:prelim}

We use the standard convention that capital letters refer to random
variables (RVs) and corresponding lowercase letters refer to values
that the RVs can take. All our RVs take values in finite sets such as
the set of bit strings of a given length or a finite subset of the
reals, so that our RVs can be viewed as functions on a finite
probability space. We usually just work with the induced joint
distributions on the sets of values assumed by the RVs.  When working
with conditional probabilities, we implicitly exclude points where the
conditioner has zero probability whenever appropriate.  We use
$\mathbb{P}(\ldots)$ to denote probabilities and $\mathbb{E}(\ldots)$
for expectations.  Inside $\mathbb{P}(\ldots)$ and when used as
conditioners, logical statements involving RVs are event
specifications to be interpreted as the event for which the statement
is true.  For example, $\mathbb{P}(R>\delta)$ is equivalent to
$\mathbb{P}(\{\omega:R(\omega)>\delta\})$, which is the probability of
the event that the RV $R$ takes a value greater than $\delta$. The
same convention applies when denoting events with $\{\ldots\}$. For
example, the event in the previous example is written as
$\{R>\delta\}$. While formally events are sets, we commonly use
logical language to describe relationships between events. For
example, the statement that $\{R>\delta\}$ implies $\{S>\epsilon\}$
means that as a set, $\{R>\delta\}$ is contained in
$\{S>\epsilon\}$. When they appear outside the the
mentioned contexts, logical statements are constraints on RVs. For
example, the statement $R>\delta$ means that all values $r$ of $R$
satisfy $r>\delta$, or equivalently, for all $\omega$,
$R(\omega)>\delta$.  As usual, comma separated statements are combined
conjunctively (with ``and'').  (In the main text, for clarity, we have
used an explicit ``AND'' for this purpose.)

If there are free RVs inside $\mathbb{P}(\ldots)$ or in the
conditioner of $\mathbb{E}(\ldots|\ldots)$ outside event specifications, the
final expression defines a new RV as a function of the free RVs.  An
example from the Entropy Production Theorem is the expression
$\mathbb{P}({\bf AB}|{\bf XY})$, which defines the RV that takes the value
$\mathbb{P}({\bf AB}={\bf ab}|{\bf XY}={\bf xy})$ when the event
$\{{\bf ABXY}={\bf abxy}\}$ occurs.  Values of RVs such as ${\bf x}$
appearing by themselves in $\mathbb{P}(\ldots)$ denote the event
$\{{\bf X}={\bf x}\}$. Thus we abbreviate expressions such as
$\mathbb{P}({\bf AB}={\bf ab}|{\bf XY}={\bf xy})$ by $\mathbb{P}({\bf ab}| {\bf xy})$.
Sometimes it is necessary to disambiguate the probability distribution
with respect to which $\mathbb{E}(\ldots)$ is to be computed.  In such cases we
use a subscript at the end of the expression consisting of a symbol
for the probability distribution, so $\mathbb{E}(T)_\mathbb{Q}$ is the expectation of
$T$ with respect to the distribution $\mathbb{Q}$. In a few instances, we use
$\llbracket\phi\rrbracket$ for logical expressions $\phi$ to denote
the $\{0,1\}$-valued function evaluating to $1$ iff $\phi$ is true.

The amount of randomness that can be extracted from an RV $R$ is
  quantified by the \textit{min-entropy}, defined as $-\log_2 \max_r
  \mathbb{P}(R=r)$. The error of the output of an extractor is given as the
  \textit{total variation} (TV) distance from uniform. Given two
  probability distributions $\mathbb{P}_1$ and $\mathbb{P}_{2}$ for $R$, the TV
  distance between them is given by
\begin{eqnarray}\label{e:condTV}
  \text{TV}(\mathbb{P}_1,\mathbb{P}_2)&=&\frac{1}{2} \sum_r \left| \mathbb{P}_1(R=r)-\mathbb{P}_2(R=r)\right|\notag\\
  &=& \sum_{r:\mathbb{P}_{1}(r)>\mathbb{P}_{2}(r)}\left( \mathbb{P}_{1}(R=r)-\mathbb{P}_{2}(R=r)\right)\notag\\
  &=& \sum_{r}\llbracket \mathbb{P}_{1}(r)>\mathbb{P}_{2}(r)\rrbracket\left( \mathbb{P}_{1}(R=r)-\mathbb{P}_{2}(R=r)\right).
\end{eqnarray}
We sometimes compute TV distances for distributions of specific RVs,
conditional or unconditional ones. For this we introduce the notation
$\mathbb{P}_X$ for the distribution of values of $X$ according to $\mathbb{P}$, and
$\mathbb{P}_{X|Y=y}$ for the distribution of $X$ conditioned on the event
$\{Y=y\}$. With this notation, $\mathbb{P}_X\mathbb{P}_Y$ refers to the product distribution that assigns probability $\mathbb{P}_X(X=x)\mathbb{P}_Y(Y=y)$ to the event $\{X=x,Y=y\}$.

For the proof of the Protocol Soundness Theorem, we need three results
involving the TV distance.  The first is the triangle inequality for
TV distances, see Ref.~\cite{levin:2009} for this and other basic
properties of TV distances.
\begin{equation}\label{e:TIforTV}
\text{TV}(\mathbb{P}_{1},\mathbb{P}_{3}) \leq \text{TV}(\mathbb{P}_{1},\mathbb{P}_{2})+\text{TV}(\mathbb{P}_{2},\mathbb{P}_{3}).
\end{equation}

According to the second result, if $\mathbb{P}$ and $\mathbb{Q}$ are joint distributions
of RVs $V$ and $W$, where the marginals of $W$ satisfy $\mathbb{P}(w)=\mathbb{Q}(w)$,
then the distance between them is given by the average conditional
distance. This is explicitly calculated as follows:
\begin{eqnarray}\label{e:TVsameconditionals}
\text{TV}(\mathbb{P}_{VW},\mathbb{Q}_{VW}) &=&
  \sum_{w}\sum_{v}\llbracket \mathbb{P}(v,w)>\mathbb{Q}(v,w)\rrbracket
    \left(\mathbb{P}(v,w)-\mathbb{Q}(v,w)\right)\notag\\
  &=&
  \sum_{w}\sum_{v}\llbracket \mathbb{P}(v|w)\mathbb{P}(w)>\mathbb{Q}(v|w)\mathbb{Q}(w)\rrbracket
    \left(\mathbb{P}(v|w)\mathbb{P}(w)-\mathbb{Q}(v|w)\mathbb{Q}(w)\right)\notag\\
  &=& 
  \sum_{w}\sum_{v}\llbracket \mathbb{P}(v|w)>\mathbb{Q}(v|w)\rrbracket
    \left(\mathbb{P}(v|w)-\mathbb{Q}(v|w)\right)\mathbb{P}(w)\notag\\
  &=& \sum_{w}\text{TV}(\mathbb{P}_{V|W=w},\mathbb{Q}_{V|W=w})\mathbb{P}(w).
\end{eqnarray}

The third result is a special case of the data-processing inequality
for TV distance. See Ref.~\cite{pardo:1997} for this and many other
data-processing inequalities.  Let $V$ be a random variable taking
values in a finite set $\mathcal V$, and let
$F:\mathcal V \to \mathcal W$ be a function so that $F(V)$ is a random
variable taking values in the set $\mathcal W$. Then if $\mathbb{P}$ and $\mathbb{Q}$
are two distributions of $V$,
\begin{equation}\label{e:classicalprocessing}
\text{TV}\big(\mathbb{P}_V,\mathbb{Q}_V\big) \ge \text{TV}\big(\mathbb{P}_{F(V)}, \mathbb{Q}_{F(V)}\big). 
\end{equation}
Here is a proof of this inequality. Write
$\mathcal W=\{s_1,...,s_c\}$, and for each $i\in\{1,\ldots ,c\}$,
define $\mathcal V_i=\{v:f(v)=s_i\}$. The $\mathcal{V}_{i}$ form a
partition of $\mathcal V$. Then we have 
\begin{eqnarray}
  \text{TV}\big(\mathbb{P}_{F(V)}, \mathbb{Q}_{F(V)}\big)&=& \frac{1}{2} \sum_{i=1}^c \left|\mathbb{P}(V\in \mathcal V_i)-\mathbb{Q}(V \in \mathcal V_i)\right|\notag\\
                                       \ignore{&=& \frac{1}{2} \sum_{i=1}^c \left|\sum_{v\in \mathcal V_i}\mathbb{P}(V=v)-\sum_{v\in \mathcal V_i}\mathbb{Q}(V=v)\right|\notag\\}
                                       &=& \frac{1}{2} \sum_{i=1}^c \left|\sum_{v\in \mathcal V_i}\left[\mathbb{P}(V=v)-\mathbb{Q}(V=v)\right]\right|\notag\\
                                       &\le& \frac{1}{2} \sum_{i=1}^c \sum_{v\in \mathcal V_i}\left|\mathbb{P}(V=v)-\mathbb{Q}(V=v)\right|\notag\\ 
                                       \ignore{&=& \frac{1}{2} \sum_{v\in \mathcal V}\left|\mathbb{P}(V=v)-\mathbb{Q}(V=v)\right|\notag\\}
                                       &=& \text{TV}\big(\mathbb{P}_V,\mathbb{Q}_V\big).
\end{eqnarray} 

\begin{sloppypar}
  We sometimes need to refer to the sequences of RVs associated with the first $i-1$
  trials. To do this we use notation such as $({\bf AB})_{<i}$ for the outcome
  sequence $A_1B_1A_2B_2...A_{i-1}B_{i-1}$, $({\bf XY})_{<i}$ for the
  settings sequence $X_1Y_1...X_{i-1}Y_{i-1}$, and $({\bf ABXY})_{<i}$
  for the joint outcomes and settings sequence
  $A_1B_1X_1Y_1...A_{i-1}B_{i-1}X_{i-1}Y_{i-1}$.  In general we often
  juxtapose RVs to indicate the ``joint'' RV.  From our assumptions
  Eqs.~\ref{e:mtunifsettings} and~\ref{e:mtnosig} and the fact that
  $\text{past}_{i}$ subsumes the trial settings and outcomes from
  trials $1$ through $i-1$, we obtain
\begin{equation}\label{e:indeppast}
  \forall i \in (1,...,n), \quad \mathbb{P}_e\left(X_iY_i|({\bf ABXY})_{<i}\right) = \mathbb{P}_e(X_iY_i) = 1/4,
\end{equation}
and
\begin{eqnarray}\label{e:nosig}
\mathbb{P}_e(A_i|X_iY_i, ({\bf ABXY})_{<i})&=&\mathbb{P}_e(A_i|X_i, ({\bf ABXY})_{<i})\notag \\ 
\mathbb{P}_e(B_i |X_iY_i, ({\bf ABXY})_{<i})&=&\mathbb{P}_e(B_i |Y_i, ({\bf ABXY})_{<i}).
\end{eqnarray}
These are the forms of our assumptions used in the proof
of the Entropy Production Theorem.
\end{sloppypar}

For a generic trial of a two station Bell test, a distribution is
defined to be non-signaling if
\begin{equation}\label{e:nosiggen}
\mathbb{P}(A|XY)=\mathbb{P}(A|X) \quad \text{and} \quad
\mathbb{P}(B |XY)=\mathbb{P}(B |Y).
\end{equation}
Such distributions form a convex polytope and include the
\textit{local realist} (LR) distributions.  Using the conventions
of~\cite{BBP}, these are defined as follows: Let $\lambda$ range over
the set of sixteen four-element vectors of the form
$(a_0,a_1,b_0,b_1)$ with elements in $\{\text{+},0\}$.  Each $\lambda$
induces settings-conditional deterministic
distributions according to
\begin{equation}\label{e:localdet}
\mathbb{P}_\lambda(ab|xy) = \begin{cases}
1, & \text{ if $a=a_x$ and $b=b_y$,}\\
0, & \text{ otherwise.}\\
\end{cases}
\end{equation}
Then a probability distribution $\mathbb{P}$ is LR iff its
conditional probabilities $\mathbb{P}(ab|xy)$ can be written as a convex
combination of the $\mathbb{P}_\lambda(ab|xy)$. That is
\begin{equation}\label{e:local}
\mathbb{P}(ab|xy)=\sum_\lambda q_\lambda \mathbb{P}_\lambda(ab|xy),
\end{equation}
with $q_{\lambda}$ a $\lambda$-indexed set of nonnegative numbers summing to 1. 
This definition agrees with the one given in the main text. 

The eight ``Popescu-Rohrlich (PR)
  boxes'' \cite{PRBOX} are examples of non-signaling distributions
  that are not LR. One of the PR boxes is defined by
\begin{equation}\label{e:PRbox}
\mathbb{P}_{\text{PR}}(ab|xy)=\begin{cases} 1/2 & \text{ if } xy\ne 11 \text{ and }a=b, \text{ or if } xy = 11 \text{ and }a\ne b,\\
0 & \text{ otherwise,}
\end{cases}
\end{equation}
and the other seven are obtained by relabeling settings or outcomes.
We take advantage of the facts that a PR box contains one bit of
randomness conditional on the settings and that the PR boxes
together with the $16$ deterministic LR distributions of \eqref{e:localdet} form the set of
extreme points of the non-signaling polytope~\cite{barrett:2005}.

\subsection{Proof of the Entropy Production Theorem} 
\label{st:ept}

The conditions on $T$ given in the main text are that $T>0$,
$\mathbb{E}(T)_\mathbb{P}\leq 1$ for every LR distribution $\mathbb{P}$ and $\mathbb{E}(T)_\mathbb{Q}\leq 1+m$
for every non-signaling distribution $\mathbb{Q}$, where the bound $1+m$ is
achievable.  In the following, we call these the Bell-function
conditions with bound $m$.  Here we prove the Entropy Production
Theorem in the more general form where the $T_{i}$ can be chosen based
on $({\bf abxy})_{<i}$.  We call $T_{i}$ a past-parametrized family of
Bell functions if for all $({\bf abxy})_{<i}$,
$T_{i}(a_{i}b_{i}x_{i}y_{i},({\bf abxy})_{<i})$ satisfies the
Bell-function conditions with bound $m'\leq m$ when considered as a
function of the results $a_{i}b_{i}x_{i}y_{i}$ from the $i$'th
trial. The theorem and its proof can also be directly applied to the
special case where $T_i$ is the same function for all trials $i$.

\begin{Theorem}\label{t:ept}
  Let $T_{i}$ be a past-parametrized family of Bell functions as
  defined in the previous paragraph.  Then in an experiment of $n$
  trials obeying Eqs.~\ref{e:mtunifsettings} and~\ref{e:mtnosig}, the
  following inequality holds for all $\epsilon_{\mathrm{p}} \in (0,1)$
  and $\Vthresh$ satisfying
    $1\le \Vthresh \le (1+(3/2)m)^{n}\epsilon_{\mathrm{p}}^{-1}$:
  \begin{equation}
    \mathbb{P}_e\left(\mathbb{P}_e({\bf AB}|{\bf XY})> \delta , V\ge \Vthresh \right) \le\epsilon_{\mathrm{p}}
  \label{e:1}
  \end{equation}
  where $\delta =
  [1+(1-\sqrt[n]{\epsilon_{\mathrm{p}}\Vthresh})/2m]^n$ and
  $\mathbb{P}_e$ represents the probability distribution conditioned on the
  event $\{E=e\}$.
\end{Theorem}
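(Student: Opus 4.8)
\noindent\textit{Proof plan.}\ The plan is to bound the probability of the bad event $B:=\{\mathbb{P}_e(\mathbf{AB}|\mathbf{XY})>\delta\}\cap\{V\ge\Vthresh\}$ by producing a nonnegative supermartingale of mean at most $1$ that is forced to be $\ge 1/\epsilon_{\mathrm p}$ on $B$, and then invoking Markov's inequality. First I would pass to a trial-by-trial description. By Eqs.~\ref{e:indeppast} and~\ref{e:nosig}, conditioned on $\{E=e\}$ and the past $\pi_i:=(\mathbf{ABXY})_{<i}$ the settings $X_iY_i$ are uniform and the outcome distribution $\mathbb{Q}_i:=\mathbb{P}_e(A_iB_i|X_iY_i,\pi_i)$ is non-signaling; future settings also drop out of this conditional, so, writing $g_i:=\mathbb{P}_e(A_iB_i|X_iY_i,\pi_i)$ as a random variable, the chain rule together with $\mathbb{P}_e(X_iY_i|\pi_i)=\tfrac14$ gives $\mathbb{P}_e(\mathbf{AB}|\mathbf{XY})=\prod_{i=1}^n g_i$, so $B=\{\prod_i g_i>\delta\}\cap\{\prod_i T_i\ge\Vthresh\}$.

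The structural heart is a two-sided estimate for a single non-signaling box via its ``PR content''. Since the extreme points of the non-signaling polytope are the $16$ deterministic LR boxes and the $8$ PR boxes, fix, as a measurable function of $\pi_i$, a decomposition of $\mathbb{Q}_i$ with total PR-weight $r_i\in[0,1]$. A deterministic LR box has $\mathbb{E}(T_i)\le 1$ and a PR box has $\mathbb{E}(T_i)\le 1+m$ (the non-signaling bound, as $m'_i\le m$), so $\mathbb{E}_e(T_i|\pi_i)=\mathbb{E}_{\mathbb{Q}_i}(T_i)\le 1+mr_i$; and since a deterministic box contributes $0$ or $1$ and a PR box $0$ or $\tfrac12$ to any fixed conditional probability, $g_i\le\max_{ab,xy}\mathbb{Q}_i(ab|xy)\le 1-r_i/2$ almost surely. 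Because $r_i$ is $\pi_i$-measurable, $1+mr_i>0$ and $T_i>0$, the first estimate shows $M_k:=\prod_{i\le k}T_i/(1+mr_i)$ is a nonnegative supermartingale with $M_0=1$, hence $\mathbb{E}_e(M_n)\le 1$.

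It remains to bound $\prod_i(1+mr_i)$ on $B$. Set $c:=\sqrt[n]{\epsilon_{\mathrm p}\Vthresh}$; if $c\le 1$ then $\delta\ge 1$ and $B$ is empty, so assume $c>1$, in which case the hypothesis $\Vthresh\le(1+\tfrac32 m)^n\epsilon_{\mathrm p}^{-1}$ gives $\tfrac14\le\delta^{1/n}=1+\tfrac{1-c}{2m}<1$. On $B$ the second estimate yields $\prod_i(1-r_i/2)\ge\prod_i g_i>\delta$, so it suffices to prove the optimization lemma: if $\rho_i\in[0,\tfrac12]$ and $\prod_i(1-\rho_i)>\delta$ then $\prod_i(1+2m\rho_i)\le c^n$ (whence $\prod_i(1+mr_i)\le c^n=\epsilon_{\mathrm p}\Vthresh$). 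The feasible set is decreasing in each $\rho_i$, so the maximum of the coordinatewise-increasing $\prod_i(1+2m\rho_i)$ lies on its upper boundary: when $\rho^\ast:=(c-1)/(2m)\le\tfrac12$ the constraint $\prod_i(1-\rho_i)=\delta$ is active and a Lagrange argument (the multiplier equation is strictly monotone, forcing all $\rho_i$ equal) puts the maximum at $\rho_i\equiv\rho^\ast$, where $1+2m\rho^\ast=c$ and the value is exactly $c^n$; when $\rho^\ast>\tfrac12$ one has $\delta<2^{-n}$, the product constraint is slack on all of $[0,\tfrac12]^n$, the maximum is at $\rho_i\equiv\tfrac12$, and its value $(1+m)^n\le c^n$ since $c>1+m$. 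Granting this, on $B$ we get $M_n=\prod_i T_i\big/\prod_i(1+mr_i)\ge\Vthresh/(\epsilon_{\mathrm p}\Vthresh)=1/\epsilon_{\mathrm p}$, so $B\subseteq\{M_n\ge 1/\epsilon_{\mathrm p}\}$ and $\mathbb{P}_e(B)\le\mathbb{P}_e(M_n\ge 1/\epsilon_{\mathrm p})\le\epsilon_{\mathrm p}\,\mathbb{E}_e(M_n)\le\epsilon_{\mathrm p}$.

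I expect the optimization lemma of the last paragraph to be the main obstacle: one must verify carefully that the maximum of $\prod(1+2m\rho_i)$ under the product constraint is indeed achieved at the symmetric point or the box corner, and check that the hypothesis on $\Vthresh$ (the constant $\tfrac32$) is exactly what makes the two regimes, and the boundary between them, cooperate. The remainder is bookkeeping: justifying the conditional-independence reductions and the identity $\mathbb{P}_e(\mathbf{AB}|\mathbf{XY})=\prod_i g_i$ from Eqs.~\ref{e:indeppast}--\ref{e:nosig}, confirming that the PR-weight $r_i$ can be chosen measurably in $\pi_i$, and observing that nothing changes for a past-parametrized family $T_i$ since each $T_i(\cdot,\pi_i)$ is, for fixed $\pi_i$, a Bell function with bound $m'_i\le m$.
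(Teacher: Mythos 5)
Your proposal is correct and reaches the theorem via the same overall architecture as the paper — chain rule to per-trial conditionals, PR-box decomposition of the per-trial non-signaling distribution, a nonnegative (super)martingale of mean at most $1$, and Markov's inequality — but it organizes the key combinatorial step differently, and it is worth noting where the two diverge. The paper normalizes its process by the exact conditional expectation, $W_c=\prod_{i\le c}T_i/\mathbb{E}(T_i|(\mathbf{ABXY})_{<i})$, giving a test martingale with $\mathbb{E}(W_n)=1$; your $M_c=\prod_{i\le c}T_i/(1+mr_i)$ is a genuine supermartingale because $1+mr_i\ge\mathbb{E}(T_i|\pi_i)$, which is just as good for Markov. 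The more substantive difference is in the estimate linking $\prod_ig_i$ to $\prod_i(1+mr_i)$. You isolate this as an ``optimization lemma'' and sketch a Lagrange argument whose maximality check you (rightly) flag as the weak point — the monotonicity of the multiplier equation shows the symmetric interior point is the \emph{unique} critical point, but not by itself that it is a maximum. The paper sidesteps this entirely: it first converts the two one-trial bounds into the single inequality $g_i\le 1+(1-\mathbb{E}(T_i|\pi_i))/(2m)$ and then applies the AM--GM inequality twice to get $\prod_ig_i\le\bigl[1+(1-(\prod_i\mathbb{E}(T_i|\pi_i))^{1/n})/(2m)\bigr]^n$. Unwinding notation, this is exactly your optimization lemma with $\rho_i$ replaced by $(\mathbb{E}(T_i|\pi_i)-1)/(2m)\in[0,1/2]$, but the AM--GM route needs no case split at $\rho^\ast=1/2$ and no second-order verification. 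If you wish to keep your formulation, the cleanest fix for the Lagrange gap is to note that $\rho\mapsto\log(1+2m\rho)$ is concave as a function of $\log(1-\rho)$ (the second derivative is $-2m(1+2m)e^{\psi}/(1+2m-2me^{\psi})^2<0$) and invoke Jensen directly; this absorbs both of your regimes into one inequality and is essentially the same calculation the paper packages as AM--GM. Everything else in your proposal — the reduction $\mathbb{P}_e(\mathbf{AB}|\mathbf{XY})=\prod_ig_i$ from Eqs.~\ref{e:indeppast}--\ref{e:nosig}, the two per-trial bounds $e_i\le 1+mr_i$ and $g_i\le 1-r_i/2$, the measurability of $r_i$ in $\pi_i$, the treatment of the edge cases $c\le 1$ and $\rho^\ast>1/2$, and the role of the constraint $\Vthresh\le(1+\tfrac32 m)^n\epsilon_{\mathrm p}^{-1}$ — matches the paper's reasoning.
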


We include the constraint
$\Vthresh\leq(1+(3/2)m)^{n}\epsilon_{\text{p}}^{-1}$ for technical
reasons. Higher values of $\Vthresh$ are unreasonably large and result
in pass probabilities that are too low to be relevant. Note that this
bound ensures $\delta\ge 2^{-2n}$, a fact that will be useful in
proving the Protocol Soundness Theorem in (\ref{st:pst}).

\begin{proof}
  Since the condition on $\{E=e\}$ appears uniformly throughout, in
  this proof we omit the subscript on $\mathbb{P}_{e}$ specifying conditioning
  on $\{E=e\}$.

  The strategy of the proof is to first obtain an upper bound on the
  one-trial outcome probabilities from the expectations of Bell
  functions $T$.  This bound can be chained to give a bound on the
  probabilities of the outcome sequence as a monotonically decreasing
  function of the product of the conditional expectations of the
  $T_{i}$. That is, a larger product of expectations yields a smaller
  maximum probability and therefore more extractable randomness.  This
  product cannot be directly observed, so we relate it to the observed
  product $V$ of the $T_{i}$ via the Markov inequality applied to an
  associated positive, mean-$1$ martingale. In the following, we
  suppress the arguments $a_{i}b_{i}x_{i}y_{i}$ and
  $({\bf ABXY})_{<i}$ of $T_{i}$.
    
  The one-trial outcome probabilities are bounded by means of the
  following lemma:

  \begin{Lemma} Let $T$ satisfy the Bell-function conditions with
    bound $m>0$.  For any non-signaling
      distribution $\mathbb{P}$ with equiprobable settings
      (Eqs.~\ref{e:indeppast} and \ref{e:nosiggen}),
    \begin{equation}\label{e:maxprobbound} 
      \max_{abxy}\mathbb{P}(ab|xy)\le 1+
      \frac{1-\mathbb{E}[T(A,B,X,Y)]_\mathbb{P}}{2m}.
    \end{equation}
  \end{Lemma}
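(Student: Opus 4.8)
The plan is to reduce everything to the extreme‑point structure of the non‑signaling polytope recorded in the Preliminaries: its extreme points are the sixteen deterministic LR distributions $\mathbb{P}_\lambda$ of \eqref{e:localdet} together with the eight PR boxes. So I would start by writing the conditional distribution of $\mathbb{P}$ as a convex combination
\[
  \mathbb{P}(ab|xy)=\sum_\lambda q_\lambda\,\mathbb{P}_\lambda(ab|xy)+\sum_{k=1}^{8} r_k\,\mathbb{P}_{\text{PR},k}(ab|xy),
\]
with nonnegative weights summing to $1$, and set $R=\sum_k r_k$ for the total PR‑box weight. Since $\mathbb{P}$ has equiprobable settings, multiplying by $1/4$ turns this into a decomposition of the \emph{joint} distribution $\mathbb{P}(abxy)$ as the same convex combination of the joint distributions of the extreme boxes, each again carrying equiprobable settings; hence $\mathbb{E}[T]_{\mathbb{P}}$ is that same convex combination of the numbers $\mathbb{E}[T]_{\mathbb{P}_\lambda}$ and $\mathbb{E}[T]_{\mathbb{P}_{\text{PR},k}}$.

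Two elementary estimates then do the work. First, each $\mathbb{P}_\lambda$ is (deterministic) LR, so $\mathbb{E}[T]_{\mathbb{P}_\lambda}\le 1$, and each PR box is non‑signaling, so $\mathbb{E}[T]_{\mathbb{P}_{\text{PR},k}}\le 1+m$; averaging with the weights above gives $\mathbb{E}[T]_{\mathbb{P}}\le (1-R)\cdot 1+R(1+m)=1+mR$. Second, every entry of every $\mathbb{P}_\lambda$ is $0$ or $1$, and by \eqref{e:PRbox} and its relabelings every entry of every PR box is $0$ or $1/2$; hence for every $a,b,x,y$ we get $\mathbb{P}(ab|xy)\le(1-R)\cdot 1+R\cdot\tfrac12=1-R/2$, so $\max_{abxy}\mathbb{P}(ab|xy)\le 1-R/2$, i.e. $R\le 2\bigl(1-\max_{abxy}\mathbb{P}(ab|xy)\bigr)$.

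Finally I would combine the two: since $m>0$, substituting the bound on $R$ into $\mathbb{E}[T]_{\mathbb{P}}\le 1+mR$ yields $\mathbb{E}[T]_{\mathbb{P}}\le 1+2m\bigl(1-\max_{abxy}\mathbb{P}(ab|xy)\bigr)$, and dividing by $2m$ and rearranging gives exactly $\max_{abxy}\mathbb{P}(ab|xy)\le 1+(1-\mathbb{E}[T]_{\mathbb{P}})/(2m)$, as claimed. (This also handles the regime $\max_{abxy}\mathbb{P}(ab|xy)<1/2$ automatically, where the conclusion is weaker than the non‑signaling bound $\mathbb{E}[T]_{\mathbb{P}}\le 1+m$.)

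There is no serious obstacle here; the argument is short. The one point requiring care is the passage between conditional and joint distributions — the extreme‑point description is stated for conditional distributions $\mathbb{P}(ab|xy)$, whereas the Bell‑function conditions on $T$ are stated for distributions satisfying \eqref{e:mtunifsettings}, so one must observe that the convex weights $q_\lambda,r_k$ also decompose the joint distribution over extreme boxes that themselves have equiprobable settings. The only conceptual content is the observation that a single scalar, the total PR‑box weight $R$, simultaneously controls the excess Bell value ($\mathbb{E}[T]_{\mathbb{P}}-1\le mR$) and the maximal outcome probability ($\max_{abxy}\mathbb{P}(ab|xy)\le 1-R/2$), so that eliminating $R$ produces the stated trade‑off between Bell violation and unpredictability.
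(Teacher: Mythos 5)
Your proof is correct, and it takes a genuinely different route from the paper. The paper's proof invokes a refined decomposition result (Corollary~2.1 of \cite{bierhorst:2016}): any non-signaling distribution with equiprobable settings can be written as $p\mathbb{Q}+(1-p)\mathbb{Q}'$ with $\mathbb{Q}$ a \emph{single} PR box and $\mathbb{Q}'$ a (generally mixed) LR distribution. The weight $p$ of that one PR box then plays the role your total PR weight $R$ plays: $\mathbb{E}(T)_{\mathbb{P}}\le 1+pm$ gives $p\ge(\mathbb{E}(T)_{\mathbb{P}}-1)/m$, and since the single PR box puts probability $1/2$ on some outcome other than $ab$ at every $xy$, one gets $\mathbb{P}(ab|xy)\le 1-p/2$. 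You bypass that refinement entirely by decomposing over all $24$ extreme points of the non-signaling polytope (the generic Krein--Milman decomposition already available from the Preliminaries), and observing that the bound $\mathbb{P}(ab|xy)\le(1-R)\cdot 1+R\cdot\tfrac12=1-R/2$ survives with $R$ the aggregate PR-box weight, because every PR-box entry is $0$ or $1/2$. The two arguments land on the same single-scalar trade-off; yours buys slightly more self-containment by not depending on the ``at most one PR box'' decomposition lemma, while the paper's version makes the picture a bit sharper (you can always consolidate into one PR box). Your care in noting that the decomposition of conditional distributions lifts to the joint distributions once the settings marginal is pinned to $1/4$ is exactly the bookkeeping point the paper also relies on implicitly.
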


  \begin{proof} As $\mathbb{P}$ is a non-signaling distribution with
    equiprobable settings, it can be obtained as a convex combination
    of extremal such distributions. In particular, $\mathbb{P}$ can be
    expressed as such a convex combination containing at most one PR
    box (\cite{bierhorst:2016}, Corollary 2.1), so we can write
    $\mathbb{P}=p\mathbb{Q}+(1-p)\mathbb{Q}'$, where $\mathbb{Q}$ is
    the PR box and $\mathbb{Q}'$ is LR.  By assumption,
    $\mathbb{E}(T)\leq 1$ for all LR distributions with uniform
    settings choices, so $\mathbb{E}(T)_{\mathbb{Q}'}\leq 1$ and
    $\mathbb{E}(T)_{\mathbb{Q}} \leq 1+m$.  Consequently,
    $\mathbb{E}(T)_{\mathbb{P}}=p\mathbb{E}(T)_{\mathbb{Q}}+(1-p)\mathbb{E}(T)_{\mathbb{Q}'}\leq
    p(1+m)+(1-p)=1+pm$, or equivalently, $p\geq
    (\mathbb{E}(T)_{\mathbb{P}}-1)/m$.  The PR box assigns
    $xy$-conditional probability $1/2$ to at least one outcome
    different from $ab$. It follows that the $xy$-conditional
    probability relative to $\mathbb{P}$ of an outcome different from
    $ab$ is at least $p/2$. Therefore, $\mathbb{P}(ab|xy)\le 1-p/2 \le
    1-(\mathbb{E}(T)_{\mathbb{P}}-1)/(2m)$. Since $ab$ and $xy$ are
    arbitrary, this gives the inequality in the lemma.
  \end{proof}

  The inequality in the lemma holds if $T$ has bound $m'\leq m$.  If
  $\mathbb{E}(T)_\mathbb{P}\leq 1$ this is trivial. If $1<\mathbb{E}(T)_\mathbb{P}\leq m'$, the lemma
  holds with $m$ substituted by $m'$, giving a lower upper bound on
  the maximum probability. With this observation, and the fact that by
  assumption, $\mathbb{P}(a_ib_i|(\mathbf{abxy})_{<i}, x_iy_i)$ is
  non-signaling with respect to $a_{i}, b_{i}, x_{i},$ and $y_{i}$, we
  can establish a bound on $\mathbb{P}({\bf ab}|{\bf xy})$ as follows:
  \begin{eqnarray}
    \mathbb{P}({\bf ab}|{\bf xy})
      &=& \prod_{i=1}^n\mathbb{P}(a_ib_i|({\bf ab})_{<i}, {\bf xy})\notag\\
      &=&\prod_{i=1}^n\mathbb{P}(a_ib_i|({\bf abxy})_{<i}, x_iy_i)\notag\\
      &\le& \prod_{i=1}^n\left[1+\frac{1-\mathbb{E}(T_i|({\bf abxy})_{<i})}{2m}\right].\label{e:eptstep2}
  \end{eqnarray}
  Here, the first identity is the chain rule for conditional
  probabilities, and the second follows from \eqref{e:indeppast}.  By
  twice using the fact that the geometric mean of a set of positive
  numbers is always less than or equal to their arithmetic mean, we
  continue from the last line of \eqref{e:eptstep2}:
  \begin{eqnarray} 
    \prod_{i=1}^n\left[1+\frac{1-\mathbb{E}(T_i|({\bf
          abxy})_{<i})}{2m}\right] &=&
    \left(\left\{\prod_{i=1}^n\left[1+\frac{1-\mathbb{E}(T_i|({\bf
              abxy})_{<i})}{2m}\right]\right\}^\frac{1}{n}\right)^n\notag\\
    &\le&\left(\frac{\sum_{i=1}^n\left[1+\frac{1-\mathbb{E}(T_i|({\bf
              abxy})_{<i})}{2m}\right]}{n}\right)^n\notag\\
    &=&\left(1+\frac{1}{2m}-\frac{\sum_{i=1}^n\left[\frac{\mathbb{E}(T_i|({\bf
              abxy})_{<i})}{2m}\right]}{n}\right)^n\notag\\
    &\le&\left(1+\frac{1}{2m}-\left[\prod_{i=1}^n\frac{\mathbb{E}(T_i|({\bf
            abxy})_{<i})}{2m}\right]^\frac{1}{n}\right)^n\notag\\
    &=&\left(1+\frac{1-\left[\prod_{i=1}^n\mathbb{E}(T_i|({\bf
            abxy})_{<i})\right]^{\frac{1}{n}}}{2m}\right)^n. \label{e:geomean}
  \end{eqnarray}

  \begin{sloppy}
    Referring back to the statement of the theorem, we see that
    $\delta$ can be expressed as $f(\epsilon_{\text{p}}\Vthresh)$
    where $f(x)=[1+(1-\sqrt[n]{x})/2m]^n$.  Expressing
    \eqref{e:geomean} in terms of this same function $f$, we see
    that the event $\{\mathbb{P}({\bf AB}|{\bf XY})> \delta\}$ implies the
    event $\left\{f\left(\prod_{i=1}^n\mathbb{E}(T_i|({\bf ABXY})_{<i})\right)
      > \delta\right\}$. The latter event is the same as
    $\left\{\prod_{i=1}^n\mathbb{E}(T_i|({\bf ABXY})_{<i})<f^{-1}(
      \delta)=\epsilon_{\text{p}}\Vthresh\right\}$, since $f^{-1}$ is
    strictly decreasing. Conjoining the event $\{V\geq \Vthresh\}$ to
    both sides of the implication, we have $\{\mathbb{P}({\bf AB}|{\bf XY})>
    \delta,V\geq \Vthresh\}$ implies $\left\{\prod_{i=1}^n\mathbb{E}(T_i|({\bf
        ABXY})_{<i})<\epsilon_{\text{p}}\Vthresh,V\geq
      \Vthresh\right\}$, and so by the monotonicity of probabilities,
    \begin{equation}\label{e:alt1}
      \mathbb{P}\left(\mathbb{P}({\bf AB}|{\bf XY})> \delta,
        V\ge \Vthresh\right) \leq \mathbb{P}\left(\prod_{i=1}^n\mathbb{E}(T_i|(\bfABXY)_{<i})<\epsilon_{\text{p}}\Vthresh , V \geq \Vthresh\right).
    \end{equation}
    The event $\{\Phi\}$ whose probability appears on the left-hand
    side of this equation is the event in the theorem statement whose
    probability we are required to bound.  For any values of the RVs,
    the two inequalities in the event on the right-hand side imply the
    inequality in the event
    $\{\Psi\}=\left\{V/\prod_{i=1}^n\mathbb{E}(T_i|(\bfABXY)_{<i})\ge
      1/\epsilon_{\text{p}}\right\}$. Hence $\mathbb{P}(\Phi)\leq \mathbb{P}(\Psi)$.  It
    remains to show that $\mathbb{P}(\Psi)\leq \epsilon_{\text{p}}$. For this
    purpose we define the sequence $\{W_c\}_{c=1}^{n}$ of RVs by
    \begin{equation}
      W_c = \prod_{i=1}^{c}\frac{T_{i}}{\mathbb{E}(T_i|(\bfABXY)_{<i})},
    \end{equation}
    so that $\{\Psi\}=\{W_{n}\geq 1/\epsilon_{\text{p}}\}$.
    
    By definition, $W_{c}> 0$ and the factors
    $T_{i}/\mathbb{E}(T_{i}|(\bfABXY)_{<i})$ have expectation $1$ conditional
    on the past. Sequences of RVs with these properties are
    referred to as test martingales~\cite{shafer:2009} and satisfy
    that $\mathbb{E}(W_{n})=1$, which can be verified directly by induction:
    \begin{align}
      \mathbb{E}(W_c|({\bf ABXY})_{<c}) &=    \mathbb{E}\left(\prod_{i=1}^{c}\frac{T_{i}}{\mathbb{E}(T_i|(\bfABXY)_{<i})}\middle|({\bf ABXY})_{<c}\right)\notag\\
      &=    \mathbb{E}\left(\left(\prod_{i=1}^{c-1}\frac{ T_{i}}{\mathbb{E}(T_{i}|({\bf ABXY})_{<i})} \right)\frac{1}{\mathbb{E}(T_{c}|({\bf ABXY})_{<c})}T_{c}\middle|({\bf ABXY})_{<c}\right)\notag\\
      &=    \left(\prod_{i=1}^{c-1}\frac{ T_{i}}{\mathbb{E}(T_{i}|({\bf ABXY})_{<i})}\right) \frac{1}{\mathbb{E}(T_{c}|({\bf ABXY})_{<c})}\mathbb{E}\left(T_{c}\middle|({\bf ABXY})_{<c}\right)\notag\\
      &= W_{c-1},\label{e:alt2}
    \end{align}
    where in the second last line, we pulled out factors that are
    functions of the conditioner $(\bfABXY)_{<c}$ by applying the rule
    that if $F$ is a function of $H$, then $\mathbb{E}(FG|H)=F\mathbb{E}(G|H)$.  Taking
    the unconditional expectation of both sides of \eqref{e:alt2} and
    invoking the law of total expectation, we have
    $\mathbb{E}(W_c)=\mathbb{E}(W_{c-1})$, and so inductively, $\mathbb{E}(W_n)=\mathbb{E}(W_1)$.  Since
    $\mathbb{E}(W_{1})=1$, the claim follows. To finish the proof of the
    Entropy Production Theorem, we apply Markov's inequality to obtain
    $\mathbb{P}(W_{n}\geq 1/\epsilon_{\text{p}})\leq\epsilon_{\text{p}}$ and
    consequently $\mathbb{P}(\Phi)\leq\epsilon_{\text{p}}$.
  \end{sloppy}

\end{proof}

Now that we have proved the Entropy Production Theorem for any
past-parametrized family of Bell functions, we can justify the
strategy described in the main text and used in our protocol, where we
set outcomes to $0$ after $\Vthresh$ is exceeded by the running
product. Given the constraints on $T$ used in the protocol, this
strategy is equivalent to setting the remaining Bell functions to
$T_{i}=1$ as allowed by past-parametrization.  Formally, since the
running product $V_{i-1}=\prod_{i=j}^{i-1}T_{j}$ is a function of
$({\bf ABXY})_{<i}$, we can define $T_{i}=T$ conditional on
$\{V_{i-1}<\Vthresh\}$ and $T_{i}=1$ conditional on the complement.

\subsection{Choosing the Bell Function $T$}
\label{st:T} 

The Entropy Production Theorem does not indicate how to find functions
$T$ satisfying the specified conditions. We seek a high typical value
of $V=\prod_{i=1}^{n}T_{i}$, as this permits larger values of
$\Vthresh$ and consequently more extractable randomness at the same
values of $\epsilon_{\text{p}}$ and $m$. Here, we describe a procedure
for constructing a function $T$ that can be expected to perform well
if the trial results are i.i.d.~with known distribution. We estimate
the distribution from an initial portion of the run that we set aside
as training data, and in a stable experiment we expect that the trial
results' statistics are i.i.d.~to a good approximation. Note however
that the optimistic i.i.d.~assumption is only used as a heuristic to
construct $T$; once $T$ is chosen the guarantees of the Entropy
Production Theorem hold regardless of whether the trial results are
actually i.i.d.  

The observed measurement outcome frequencies for training data
generally yield a weakly signaling distribution that does not exactly
satisfy the non-signaling constraints in \eqref{e:nosiggen}, due to
statistical fluctuation. Hence one can obtain an estimated
distribution by determining the maximum likelihood non-signaling
distribution for the observed measurement outcomes frequencies as
described in Ref.~\cite{zhang:2011}.  Let $N(xy)$ be the number of
training trials at setting $xy$ and $f(ab|xy)=N(ab|xy)/N(xy)$ be the
empirical frequencies of outcome $ab$ given setting $xy$. Let
$\mathbb{Q}(a,b,x,y)$ be a candidate for the probability distribution
from which these frequencies were sampled. Then up to an additive term
independent of $\mathbb{Q}$ accounting for the settings probabilities,
the log-likelihood of $f$ given $\mathbb{Q}$ is
$L(\mathbb{Q})=\sum_{a,b,x,y}N(xy)f(ab|xy)\ln(\mathbb{Q}(a,b|x,y))$. We
maximized a variant of this function to find our estimated
distribution $\mathbb{Q}(a,b,x,y)$:
\begin{align}\label{e:convexfindNS}
&\underset{\mathbb{Q}}{\text{Maximize }} \sum_{abxy}f(ab|xy)\ln \mathbb{Q}(a,b,x,y)\\
&\begin{array}{lrcll}
\!\!\text{Subject to }& \mathbb{Q}(x,y)&=&1/4 & \text{for}\quad x, y \in\{0,1\}\\
 & \mathbb{Q}(a|x,y)&=&\mathbb{Q}(a|x) & \text{for} \quad x,y \in \{0,1\}, \quad a \in \{\text{+},0\}\\
 & \mathbb{Q}(b|x,y)&=&\mathbb{Q}(b|y) & \text{for} \quad x,y \in \{0,1\}, \quad b \in \{\text{+},0\}.
\end{array}\notag
\end{align}
The first group of constraints encode our knowledge that all settings
combinations are equiprobable, and the remaining constraints are
the non-signaling constraints. Note that the conditional expressions
in these constraints are equivalently expressed as linear functions of
$\mathbb{Q}(a,b,x,y)$ after using the identities $\mathbb{Q}(x,y)=1/4$.

Once the estimated distribution $\mathbb{Q}$ is obtained, we maximize
the typical values of $V$ by taking advantage of the observation that
the conditions on $T$ imply that $V^{-1}$ is a conservative $p$-value
against local realism \cite{zhang:2011}.  Such $p$-values were studied
in Ref.~\cite{zhang:2011}, which gives a general strategy, the PBR
method, for maximizing $\mathbb{E}(\ln(V))_\mathbb{Q}$. This is useful
because typical values of $V$ are close to
$\exp({n\mathbb{E}(\ln(T))_\mathbb{Q}})$: Since
$\ln(V)=\sum_{i=1}^{n}\ln(T_{i})$ is a sum of i.i.d.\ bounded terms
(given our optimistic assumption), the central limit theorem ensures
that $\ln V$ is approximately normally distributed with mean
$n\mathbb{E}(\ln(T))_\mathbb{Q}$. We therefore perform the following
optimization problem to find $T$:
\begin{align}\label{e:convexfindT}
&\underset{T}{\text{Maximize }} \mathbb{E}(\ln (T))_{\mathbb{Q}}\\
&\begin{array}{lrcll}
\!\!\text{Subject to }& \mathbb{E}(T)_{\mathbb{P}_\lambda} &\leq&1 & \forall \lambda\\
 & T(0,0,x,y) &=&1 & \forall x,y,\\
\end{array}\notag
\end{align}
where $\mathbb{P}_\lambda$ refers to the 16 conditionally deterministic LR
distributions in \eqref{e:localdet}.  This ensures that
$\mathbb{E}(T)_{\mathbb{P}_{LR}}\le 1$ for all LR distributions $\mathbb{P}_{LR}$. The second
constraint is motivated by the fact that in our experiments, an
overwhelming fraction of the trials have no detections for both
stations.  While it is possible that a better $\mathbb{E}(\ln(T))_{\mathbb{Q}}$ can be
obtained without this constraint, we have found that the improvement
is small and likely not statistically significant given the amount of
training data used to determine the results distribution. Since the
objective functions are concave and the constraints are linear, the
optimization problems given in \eqref{e:convexfindNS} and
\eqref{e:convexfindT} are readily solved numerically with standard
tools.

Given the assumption that the trial results are i.i.d., the previous
paragraph shows that the typical values for $V$ are exponential in the
number of trials, $V = e^{-n\mathbb{E}(\ln(T))-o(n)}$.  If the
experiment is successful in showing violation of local realism,
$\mathbb{E}(\ln(T))$ is positive.  Neglecting the contribution from
$o(n)$, with $\Vthresh=e^{n\mathbb{E}(\ln(T))}$, we can bound
$-\ln(\delta)$ as
\begin{eqnarray}
  -\ln(\delta) &=& -n\ln(1+(1-(\epsilon_{\text{p}}e^{n\mathbb{E}(\ln(T))})^{1/n})/(2m))\notag\\
    &=& -n\ln(1+(1-e^{\mathbb{E}(\ln(T))+\ln(\epsilon_{\text{p}})/n})/(2m))\notag\\
     &\geq& -n(1-e^{\mathbb{E}(\ln(T))+\ln(\epsilon_{\text{p}})/n})/(2m)\notag\\
     &=& n(e^{\mathbb{E}(\ln(T))+\ln(\epsilon_{\text{p}})/n}-1)/(2m)\notag\\
     &\geq& (n\mathbb{E}(\ln(T))+\ln(\epsilon_{\text{p}}))/(2m).\label{e:yikai}
\end{eqnarray}
where we used $-\ln(1+x)\geq -x$ and $e^{x}-1\geq x$.  This shows that
asymptotically (with $\epsilon_{\text{p}}$ and $\kappa$ constant) we
get at least $\mathbb{E}(\ln(T))\log_{2}(e)/(2m)=\mathbb{E}(\log_2(T))/(2m)$ bits of
randomness per trial. For the empirical distribution obtained from the
XOR 3 trials used for the protocol according to
\eqref{e:convexfindNS}, we obtain
$\mathbb{E}(\log_{2}(T))/2m=1.19\times 10^{-5}$.  The bound in
Eq.~\ref{e:yikai} shows that we can get an asymptotically positive
number of bits of randomness per trial even with $\epsilon_{\text{p}}$
exponentially small in $n$.

\subsection{The TMPS Algorithm}
\label{st:trevisan}

A strong randomness extractor with parameters
$(\sigma, \epsilon,q,d,t)$ is a function
$\text{Ext}:\{0,1\}^{q}\times \{0,1\}^d \to \{0,1\}^t$ with the
property that for any random string $R$ of length $q$ and min-entropy
at least $\sigma$, and an independent, uniformly distributed seed
string $S$ of length $d$, the distribution of the concatenation
$\text{Ext}(RS)$ with S of length $t+d$ is within TV distance
$\epsilon$ of uniform. There are constructions of extractors that
extract most of the input min-entropy $\sigma$ with few seed bits. For
a review of the achievable asymptotic tradeoffs, see
Ref.~\cite{vadhan:2012}, chapter~6.  For explicit extractors that
perform well if not optimally, we used a version of Trevisan's
construction~\cite{trevisan:2001} implemented by Mauerer, Portmann and
Scholz \cite{mauerer:2012}, which we adapted\footnote{Our adapted
  source code is available at \url{https://github.com/usnistgov/libtrevisan}.} to make it functional in our environment and to
incorporate recent constructions achieving improved parameters
\cite{ma:2012}. We call this construction the TMPS algorithm. For a
fixed choice of $\sigma$, $\epsilon$, and $q$, the TMPS algorithm can
construct a strong randomness extractor for any value $t$ obeying the
following bound:
\begin{equation}\label{e:trev1}
 t+4\log_2 t \le \sigma-6 +  4\log_2(\epsilon).
\end{equation}
Given $t$, the length of the seed satisfies 
\begin{equation} \label{e:trev2} d\le w^2\cdot\max \left\{2, 1+
    \left\lceil[\log_2(t-e)-\log_2(w-e)]/[\log_2e-\log_2(e-1)]\right\rceil\right\}
\end{equation}
where $w$ is the smallest prime larger than
$2\times\lceil\log_2(4qt^2/\epsilon^2)\rceil$. We note that the TMPS
extractors are secure against classical and quantum side information
\cite{mauerer:2012}, and this security is reflected in the parameter
constraints.  Since we do not take direct advantage of this security,
it is in principle possible to improve the parameters in the Protocol
Soundness Theorem.

For the bound on the the number of seed bits given after the Protocol
Soundness Theorem in the main text, we have $q=2n$ and
$\epsilon=\epsilon_{\text{ext}}$.  Since for any $r$, there is a prime
$w\leq 2r$, $w=O(\log(n)+\log(t/\epsilon))=O(\log(nt/\epsilon))$, were
we pulled out exponents from the $\log$, and dropped and arbitrarily
increased the implicit constants in front of each term to match
summands. The coefficient of $w^{2}$ in the bound on $d$ is
$O(\log(t))$, because of the ``minus'' sign in front of the term
containing $w$. Multiplying gives $d=O(\log(t)\log(nt/\epsilon)^{2})$.

\subsection{Proof of the Protocol Soundness Theorem}
\label{st:pst}

The distinction between the stations was needed to establish the
inequality in the Entropy Production Theorem and plays no further role
in this section.  We therefore simplify the notation by abbreviating
${\bf C}={\bf AB}$ and either ${\bf Z}={\bf XY}$ or
${\bf Z}={\bf XY}E$. In the former case $\mathbb{P}(\ldots)$ refers to
probabilities conditional on $\{E=e\}$. Otherwise, $\mathbb{P}(\ldots)$
involves no implicit conditions. The Protocol Soundness Theorem holds
regardless of which definition of ${\bf Z}$ is in force.  We write
$R_{\text{pass}}$ to refer to the RV that takes value $1$ conditional
on the passing event $\{V\ge \Vthresh\}$ and $0$ otherwise. Here is a
restatement of the Protocol Soundness Theorem. The constants
$\epsilon_{\text{p}}$ and $\delta$ appearing below are the same as in
the Entropy Production Theorem.

\begin{Theorem}
  Let $0<\epsilon_{\mathrm{ext}}<1$ and $\mathbb{P}(R_{\mathrm{pass}}=1)\ge
  \kappa>0$.  Suppose $t$ is a positive
  integer satisfying
  \begin{equation}\label{e:mttrev1st}
    t+4\log_2t \le -\log_2 \delta + \log_2 \kappa +5\log_2 \epsilon_{\mathrm{ext}} -11.
  \end{equation}
  Then if $\text{Ext}:\{0,1\}^{2n}\times \{0,1\}^d \to \{0,1\}^t$ is
  obtained by the TMPS algorithm with parameters
  $\sigma=-\log_{2}(2\delta/(\kappa\epsilon_{\mathrm{ext}})$ and
  $\epsilon=\epsilon_{\mathrm{ext}}/2$, and {\bf S} is a random
  bit string of length $d$ independent of the joint distribution of
  ${\bf C},{\bf Z},R_{\mathrm{pass}}$, we have
  \begin{equation}\label{e:pst}
    \mathrm{TV}\big(\mathbb{P}_{{\bf UZS}|R_{\mathrm{pass}}=1}, \mathbb{P}^{\mathrm{unif}}_{{\bf U}}\mathbb{P}^{\mathrm{unif}}_{{\bf S}}\mathbb{P}_{{\bf Z}|R_{\mathrm{pass}}=1}\big) \le
    \epsilon_{\mathrm{fin}}=\epsilon_{\mathrm{p}}/\kappa+\epsilon_{\mathrm{ext}},
  \end{equation}
  where ${\bf U}=\mathrm{Ext}({\bf CS})$ and $\mathbb{P}^{\mathrm{unif}}$ denotes the
  uniform probability distribution.
\end{Theorem}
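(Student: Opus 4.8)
The plan is to establish \eqref{e:pst} by inserting a single intermediate distribution between $\mathbb{P}_{\mathbf{CZS}|R_{\mathrm{pass}}=1}$ and the target $\mathbb{P}^{\mathrm{unif}}_{\mathbf{U}}\mathbb{P}^{\mathrm{unif}}_{\mathbf{S}}\mathbb{P}_{\mathbf{Z}|R_{\mathrm{pass}}=1}$. I will build a joint distribution $\mathbb{Q}$ of $(\mathbf{C},\mathbf{Z})$ that (i) has the same $\mathbf{Z}$-marginal as $\mathbb{P}_{\mathbf{Z}|R_{\mathrm{pass}}=1}$, (ii) satisfies $\mathrm{TV}\bigl(\mathbb{P}_{\mathbf{CZ}|R_{\mathrm{pass}}=1},\mathbb{Q}\bigr)\le\epsilon_{\mathrm{p}}/\kappa+\epsilon_{\mathrm{ext}}/2$, and (iii) has min-entropy at least $\sigma$ in $\mathbf{C}$ conditioned on \emph{every} value $\mathbf{z}$ of $\mathbf{Z}$. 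Given such a $\mathbb{Q}$: append the independent uniform seed $\mathbf{S}$ (its independence survives conditioning on $R_{\mathrm{pass}}=1$, so $\mathbb{P}_{\mathbf{CZS}|R_{\mathrm{pass}}=1}=\mathbb{P}_{\mathbf{CZ}|R_{\mathrm{pass}}=1}\,\mathbb{P}^{\mathrm{unif}}_{\mathbf{S}}$) and push both $\mathbb{P}_{\mathbf{CZ}|R_{\mathrm{pass}}=1}\,\mathbb{P}^{\mathrm{unif}}_{\mathbf{S}}$ and $\mathbb{Q}\,\mathbb{P}^{\mathrm{unif}}_{\mathbf{S}}$ forward through $F:(\mathbf{c},\mathbf{z},\mathbf{s})\mapsto(\mathrm{Ext}(\mathbf{cs}),\mathbf{z},\mathbf{s})$. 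By the data-processing inequality \eqref{e:classicalprocessing} (together with the fact that tensoring two distributions with a common factor preserves TV, a special case of \eqref{e:TVsameconditionals}), the distance between the two pushforwards is at most the quantity in (ii). Since $F$ leaves $\mathbf{Z}$ untouched and $\mathbb{Q}_{\mathbf{Z}}=\mathbb{P}_{\mathbf{Z}|R_{\mathrm{pass}}=1}$, the pushforward of $\mathbb{Q}\,\mathbb{P}^{\mathrm{unif}}_{\mathbf{S}}$ still has $\mathbf{Z}$-marginal $\mathbb{P}_{\mathbf{Z}|R_{\mathrm{pass}}=1}$, so \eqref{e:TVsameconditionals} rewrites its distance to the target as an average over $\mathbf{z}$ of the distance between the law of $(\mathrm{Ext}(\mathbf{CS}),\mathbf{S})$ with $\mathbf{C}\sim\mathbb{Q}_{\mathbf{C}|\mathbf{Z}=\mathbf{z}}$ (and $\mathbf{S}$ uniform, independent) and $\mathbb{P}^{\mathrm{unif}}_{\mathbf{U}}\mathbb{P}^{\mathrm{unif}}_{\mathbf{S}}$; by (iii) and the $(\sigma,\epsilon_{\mathrm{ext}}/2)$-strong-extractor property of the TMPS construction, each such term is $\le\epsilon_{\mathrm{ext}}/2$. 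That the construction delivers an extractor for the claimed $t$ is exactly \eqref{e:trev1} evaluated at $\sigma=-\log_2\!\bigl(2\delta/(\kappa\epsilon_{\mathrm{ext}})\bigr)$, $\epsilon=\epsilon_{\mathrm{ext}}/2$, $q=2n$, i.e.\ the hypothesis \eqref{e:mttrev1st}. One application of the triangle inequality \eqref{e:TIforTV} then yields $\epsilon_{\mathrm{p}}/\kappa+\epsilon_{\mathrm{ext}}=\epsilon_{\mathrm{fin}}$.

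The ingredients of (ii) and (iii) come from two ``bad'' sets. From the Entropy Production Theorem, $\mathbb{P}\bigl(\mathbb{P}(\mathbf{C}|\mathbf{Z})>\delta,\ R_{\mathrm{pass}}=1\bigr)\le\epsilon_{\mathrm{p}}$: this is the theorem as stated when $\mathbf{Z}=\mathbf{XY}$ and $\mathbb{P}=\mathbb{P}_e$, and it follows by averaging over $e$ with weights $\mathbb{P}(E=e)$ when $\mathbf{Z}=\mathbf{XY}E$, since on $\{E=e\}$ the random variable $\mathbb{P}(\mathbf{C}|\mathbf{Z})$ agrees with $\mathbb{P}_e(\mathbf{AB}|\mathbf{XY})$. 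Dividing by $\mathbb{P}(R_{\mathrm{pass}}=1)\ge\kappa$ gives $\mathbb{P}\bigl(\mathbb{P}(\mathbf{C}|\mathbf{Z})>\delta\mid R_{\mathrm{pass}}=1\bigr)\le\epsilon_{\mathrm{p}}/\kappa$. Call $\mathbf{c}$ \emph{$\mathbf{z}$-bad} if $\mathbb{P}(\mathbf{C}{=}\mathbf{c}\mid\mathbf{Z}{=}\mathbf{z})>\delta$ and $\mathbf{z}$ \emph{bad} if $\mathbb{P}(R_{\mathrm{pass}}{=}1\mid\mathbf{Z}{=}\mathbf{z})<\kappa\epsilon_{\mathrm{ext}}/2$; a one-line Bayes estimate gives $\mathbb{P}(\mathbf{Z}\text{ bad}\mid R_{\mathrm{pass}}{=}1)\le\epsilon_{\mathrm{ext}}/2$, and for $\mathbf{z}$ not bad and $\mathbf{c}$ not $\mathbf{z}$-bad, $\mathbb{P}(\mathbf{C}{=}\mathbf{c}\mid\mathbf{Z}{=}\mathbf{z},R_{\mathrm{pass}}{=}1)\le\delta\big/(\kappa\epsilon_{\mathrm{ext}}/2)=2^{-\sigma}$.

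Now set $\mathbb{Q}_{\mathbf{Z}}=\mathbb{P}_{\mathbf{Z}|R_{\mathrm{pass}}=1}$ and, for each $\mathbf{z}$, let $\mathbb{Q}_{\mathbf{C}|\mathbf{Z}=\mathbf{z}}$ be uniform on $\{0,1\}^{2n}$ when $\mathbf{z}$ is bad, and otherwise any probability distribution on $\{0,1\}^{2n}$ that pointwise dominates the subprobability $\mathbf{c}\mapsto\mathbb{P}(\mathbf{C}{=}\mathbf{c}\mid\mathbf{Z}{=}\mathbf{z},R_{\mathrm{pass}}{=}1)\,\llbracket\mathbf{c}\text{ not }\mathbf{z}\text{-bad}\rrbracket$ while keeping every atom $\le2^{-\sigma}$. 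Such a ``completion'' exists because $2^{2n}2^{-\sigma}\ge1$: writing $\beta_{\mathbf{z}}$ for the $\mathbf{z}$-bad conditional mass, that subprobability has total mass $1-\beta_{\mathbf{z}}$ with each atom $\le2^{-\sigma}$ (previous paragraph), so the room $2^{2n}2^{-\sigma}-(1-\beta_{\mathbf{z}})$ available above it is $\ge\beta_{\mathbf{z}}$ — and indeed $2^{-\sigma}=2\delta/(\kappa\epsilon_{\mathrm{ext}})\ge2\delta\ge2\cdot2^{-2n}>2^{-2n}$; this is the one place the bound $\delta\ge2^{-2n}$ recorded after the Entropy Production Theorem is used, and it also shows $\sigma<2n$ so the uniform choice for bad $\mathbf{z}$ is admissible. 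Then (iii) holds by construction, while $\mathrm{TV}\bigl(\mathbb{P}_{\mathbf{C}|\mathbf{Z}=\mathbf{z},R_{\mathrm{pass}}=1},\mathbb{Q}_{\mathbf{C}|\mathbf{Z}=\mathbf{z}}\bigr)\le\beta_{\mathbf{z}}$ for good $\mathbf{z}$ (the only $\mathbf{c}$ on which $\mathbb{P}$ can exceed $\mathbb{Q}$ are the $\mathbf{z}$-bad ones) and $\le1$ for bad $\mathbf{z}$; averaging over $\mathbf{z}$ via \eqref{e:TVsameconditionals} gives (ii), using $\mathbb{E}\bigl[\beta_{\mathbf{Z}}\mid R_{\mathrm{pass}}{=}1\bigr]\le\epsilon_{\mathrm{p}}/\kappa$ and $\mathbb{P}(\mathbf{Z}\text{ bad}\mid R_{\mathrm{pass}}{=}1)\le\epsilon_{\mathrm{ext}}/2$.

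The step I expect to take the most care is the construction of $\mathbb{Q}$. The naive alternative — deleting the bad pairs from $\mathbb{P}_{\mathbf{CZ}|R_{\mathrm{pass}}=1}$ and renormalizing separately for each $\mathbf{z}$ — changes the $\mathbf{Z}$-marginal, so \eqref{e:TVsameconditionals} no longer cleanly isolates the extractor error (the two error budgets would compound) and, worse, the per-$\mathbf{z}$ renormalization constant can blow up on $\mathbf{z}$ with small passing probability. The ``dominate-and-complete'' device keeps the $\mathbf{Z}$-marginal fixed and avoids both pitfalls, at the price of the headroom estimate — which is precisely why the paper bothers to record $\delta\ge2^{-2n}$ — while the bad-$\mathbf{z}$ bookkeeping is what produces the factor $2$ inside $\sigma$ and the $\epsilon_{\mathrm{ext}}/2+\epsilon_{\mathrm{ext}}/2$ split between repairing the source and running the extractor.
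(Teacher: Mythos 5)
Your proof is correct, but it takes a genuinely different route from the paper's at the key step. Both arguments begin the same way (get $\mathbb{P}(\mathbb{P}({\bf C}|{\bf Z})>\delta,\,R_{\text{pass}}=1)\le\epsilon_{\text{p}}$ from the Entropy Production Theorem, uniformly in $e$) and both end with the same triangle-inequality split, but the middle differs. The paper constructs a single modified distribution $\mathbb{P}^{*}$ by truncating each conditional $\mathbb{P}({\bf C}|{\bf z},R_{\text{pass}}=1)$ at a \emph{${\bf z}$-dependent} level $\gamma_{\bf z}=\delta/\mathbb{P}(R_{\text{pass}}=1|{\bf z})$ and renormalizing within each ${\bf z}$ (preserving the marginal), which gives $\mathrm{TV}\le\epsilon_{\text{p}}/\kappa$ and bounds only the \emph{average} guessing probability of $\mathbb{P}^*$ by $\delta/\kappa$; it then invokes Proposition 1 of K\"onig--Renner, a black box that turns an average-conditional-min-entropy bound into an extractor guarantee at the cost of the familiar $2/\epsilon$ factor inside $\sigma$. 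You instead truncate at a \emph{uniform} level $2^{-\sigma}$, declaring ${\bf z}$ ``bad'' when $\mathbb{P}(R_{\text{pass}}=1|{\bf z})<\kappa\epsilon_{\text{ext}}/2$ and replacing those conditionals wholesale by the uniform distribution; this costs an extra $\epsilon_{\text{ext}}/2$ in source-repair TV but buys you a \emph{worst-case} per-${\bf z}$ min-entropy bound, so the plain strong-extractor definition of \ref{st:trevisan} applies fibrewise without any appeal to K\"onig--Renner. It is instructive that the two budgets balance exactly: the threshold $\kappa\epsilon_{\text{ext}}/2$ defining your bad set is the same place the $2/\epsilon_{\text{ext}}$ penalty enters the paper's $k=-\log_2(\delta/\kappa)-\log_2(2/\epsilon_{\text{ext}})$, so both routes produce identical $\sigma$ and $\epsilon$ and hence the same constraint \eqref{e:mttrev1st}. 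Your version is more self-contained (it essentially reproves the special case of the K\"onig--Renner proposition that is actually needed) and makes visible why the paper records $\delta\ge2^{-2n}$ --- the headroom estimate for the ``dominate-and-complete'' step --- whereas in the paper that bound appears as the condition $\gamma\ge1/K$ needed for their rebalancing construction. The paper's version is shorter and delegates the per-fibre argument to a citation.

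Two small points worth noting, neither a gap. First, in the display-inequality chain for (ii) you should make explicit that for good ${\bf z}$ the atoms of the subprobability you dominate are indeed $\le2^{-\sigma}$ \emph{before} completion (you state this, but it is the load-bearing fact, and it uses both ``${\bf z}$ good'' and ``${\bf c}$ not ${\bf z}$-bad''). Second, your remark that ``the naive alternative changes the ${\bf Z}$-marginal'' slightly mischaracterizes the paper's construction, which does preserve the marginal by renormalizing within each ${\bf z}$; what the paper avoids by going through K\"onig--Renner is not a marginal mismatch but the need for a worst-case bound, which is exactly the difficulty your bad-set device addresses by a different means.
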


At this point it is tempting to just apply an extractor to ${\bf AB}$
with parameter $\sigma$ given by the nominal
$\epsilon_{\text{p}}$-smoothed min-entropy
$\sigma=-\log_{2}(\delta)$. However, this does not guarantee the
strong condition \eqref{e:pst}. Specifically, there are three reasons
that \eqref{e:1} of the Entropy Production Theorem does not
immediately support the application of an extractor to ${\bf AB}$. The
first is that as specified, the extractor input should have
min-entropy $-\log_2\max_{\bf ab}\mathbb{P}({\bf AB}={\bf ab})=\sigma$
with no smoothness error. The second is that the settings-conditional
smoothed min-entropies can be substantially smaller than the nominal
one.  The third is that the min-entropy is also affected by the
probability of passing $\kappa$ being less than $1$. Accounting for
these effects requires an analysis of the settings- and
pass-conditional distributions and the extractor parameters specified
in the theorem.

\begin{proof}
  The proof proceeds in two main steps inspired by the corresponding
  arguments in Ref.~\cite{pironio:2013}. In the first we determine a
  probability distribution $\mathbb{P}^*$ that is within $\epsilon_{\text{p}}$
  of $\mathbb{P}$ but satisfies an appropriate bound on the conditional
  probabilities of ${\bf C}$ with probability $1$ rather than
  $1-\epsilon_{\text{p}}$.  The distribution $\mathbb{P}^*$'s marginals agree
  with those of $\mathbb{P}$ on ${\bf ZS}$. The probabilities conditional on
  aborting also agree, and uniformity and independence of ${\bf S}$ is
  preserved.  In the second, we apply a proposition from
  Ref.~\cite{konig:2008} on applying extractors to distributions such
  as $\mathbb{P}^{*}$ whose average maximum conditional probabilities satisfy a
  specified bound.  The proposition enables us to determine the
  extractor parameters that achieve the required final distance
  $\epsilon_{\text{fin}}$ in the theorem.

  The Entropy Production Theorem guarantees that
  $\mathbb{P}(\mathbb{P}({\bf C}|{\bf Z})>\delta,
  R_{\text{pass}}=1)\leq\epsilon_{\text{p}}$. In the case where $E$ is
  included in $\bf{Z}$, this follows by the uniformity in $\{E=e\}$ of
  the theorem's conclusion:
  \begin{eqnarray}
    \mathbb{P}(\mathbb{P}({\bf C}|{\bf Z},E)>\delta, R_{\text{pass}}=1)
    &=& \sum_{e}\mathbb{P}(\mathbb{P}({\bf C}|{\bf Z},E)>\delta, R_{\text{pass}}=1|E=e)\mathbb{P}(E=e)\notag\\
    &=&\sum_{e}\mathbb{P}(\mathbb{P}({\bf C}|{\bf Z},E=e)>\delta, R_{\text{pass}}=1|E=e)\mathbb{P}(E=e)\notag\\
    &\le&\sum_{e}\epsilon_{\text{p}}\mathbb{P}(e)\notag\\
    &=&\epsilon_{\text{p}}.
  \end{eqnarray}

  Using the following construction, one may observe that for any
  random variable $U$ with values in a set of cardinality $K$ and
  $\gamma$ satisfying $1/K\le\gamma$, and any distribution
  $\mathbb{P}'$ of $U$, there exists $\mathbb{P}''$ such that
  $\mathbb{P}''(U)\leq\gamma$ and $\mathbb{P}''$ is within TV distance
  $\mathbb{P}'(\mathbb{P}'(U)>\gamma)$ of $\mathbb{P}'$.  To construct
  $\mathbb{P}''$, for $u$ such that $\mathbb{P}'(u)>\gamma$, set
  $\mathbb{P}''(u)=\gamma$. To compensate for the reduced
  probabilities, increase the values of $\mathbb{P}'$ to obtain those
  of $\mathbb{P}''$ without exceeding $\gamma$ on the set $\{u
  :\mathbb{P}'(u)\le\gamma\}$ so that $\mathbb{P}''$ is a normalized
  probability distribution. This is possible because in constructing
  $\mathbb{P}''$ from $\mathbb{P}'$, the total reduction in
  probability on $\{u:\mathbb{P}'(u)>\gamma\}$ given by
  $r_{-}=\sum_{u:\mathbb{P}'(u)>\gamma}(\mathbb{P}'(u)-\gamma)$ is
  less than the maximum total increase possible given by
  $r_{+}=\sum_{u:\mathbb{P}'(u)\le\gamma}(\gamma-\mathbb{P}'(u))$, as
  a consequence of $\gamma\geq 1/K$.  To see this, compute
  $r_{+}-r_{-} = \sum_{u}(\gamma-\mathbb{P}'(u))\geq
  \sum_{u}(1/K-\mathbb{P}'(u))= 0$.  The distance
  $\text{TV}(\mathbb{P}',\mathbb{P}'')$ is given by
  $\sum_{u:\mathbb{P}'(u)>\gamma}(\mathbb{P}'(u)-\gamma) \le
  \mathbb{P}'(\mathbb{P}'(U)>\gamma)$. We can now construct
  $\mathbb{P}^*$ by defining its conditional distributions on ${\bf
    C}$.  For this, substitute $U\leftarrow {\bf C}$,
  $\mathbb{P}'(U)\leftarrow \mathbb{P}({\bf C}|{\bf
    z},R_{\text{pass}}=1)$, $\gamma\leftarrow
  \delta/\mathbb{P}(R_{\text{pass}}=1|{\bf z})$ and
  $\mathbb{P}''(U)\leftarrow \mathbb{P}^*({\bf C}|{\bf
    z},R_{\text{pass}}=1)$.  The constraint on $\gamma$ is satisfied
  because the upper bound on $\Vthresh$ in the statement of the
  Entropy Production Theorem ensures that $\delta\geq2^{-2n}$.  Each
  conditional distribution satisfies $\mathbb{P}^*({\bf C}|{\bf
    z},R_{\text{pass}}=1)\leq \delta/\mathbb{P}(R_{\text{pass}}=1|{\bf
    z})$, which is equivalent to $\mathbb{P}^*({\bf
    C},R_{\text{pass}}=1|{\bf z})\leq \delta$, and is within TV
  distance $\mathbb{P}\big (\mathbb{P}({\bf C}|{\bf z},
  R_{\text{pass}}=1)>\delta/\mathbb{P}(R_{\text{pass}=1}|{\bf z})\big
  |{\bf z},R_{\text{pass}}=1 \big)$ of $\mathbb{P}_{{\bf C}|{\bf
      z},R_{\text{pass}}=1}$. The joint probability distribution
  $\mathbb{P}^*$ is determined pointwise from the already assigned
  values of $\mathbb{P}^{*}({\bf c}|{\bf z}r_{\text{pass}})$ for
  $r_{\text{pass}}=1$ as
  \begin{equation}
    \mathbb{P}^*({\bf czs}r_{\text{pass}}) =
    \left\{\begin{array}{ll}
        \mathbb{P}^*({\bf c}|{\bf z}r_{\text{pass}})
         \mathbb{P}({\bf zs}r_{\text{pass}}) & \textrm{if $r_{\text{pass}}=1$}\\
        \mathbb{P}({\bf czs}r_{\text{pass}})&\textrm{otherwise}.
      \end{array}\right.
  \end{equation}
  Since the marginal distribution of ${\bf ZS}R_{\text{pass}}$ is
  unchanged, the full TV distance between $\mathbb{P}$ and $\mathbb{P}^*$ is given by
  the average conditional TV distance with respect to
  ${\bf ZS}R_{\text{pass}}$, see \eqref{e:TVsameconditionals}.  Since
  the conditional TV distance is zero when $R_{\text{pass}}=0$ and
  from independence of ${\bf S}$, we obtain
  \begin{eqnarray}
    \text{TV}(\mathbb{P}^*_{{\bf CZS}R_{\text{pass}}},\mathbb{P}_{{\bf CZS}R_{\text{pass}}}) \hspace*{-1in}&&\notag\\
    &=&\sum_{{\bf zs}r_{\text{pass}}}
    \text{TV}\big(\mathbb{P}^*_{{\bf C}|{\bf zs}r_{\text{pass}}},
    \mathbb{P}_{{\bf C}|{\bf zs}r_{\text{pass}}}\big) \mathbb{P}({\bf zs}r_{\text{pass}})
    \notag\\
    &=&\sum_{{\bf zs}r_{\text{pass}}}
    \text{TV}\big(\mathbb{P}^*_{{\bf C}|{\bf zs}r_{\text{pass}}},
    \mathbb{P}_{{\bf C}|{\bf zs}r_{\text{pass}}}\big) \llbracket r_{\text{pass}}=1\rrbracket \mathbb{P}({\bf zs}r_{\text{pass}})
    \notag\\
    &\le& \sum_{{\bf zs}r_{\text{pass}}}
  \mathbb{P}\big(\mathbb{P}({\bf C},R_{\text{pass}}=1|{\bf z})>\delta\big |{\bf z},R_{\text{pass}}=1\big)
   \llbracket r_{\text{pass}}=1\rrbracket
   \mathbb{P}({\bf zs}r_{\text{pass}})\notag\\
    &=& \sum_{{\bf z}r_{\text{pass}}}
  \mathbb{P}\big(\mathbb{P}({\bf C},R_{\text{pass}}=1|{\bf z})>\delta\big |{\bf z},R_{\text{pass}}=1\big)
  \llbracket r_{\text{pass}}=1\rrbracket
   \mathbb{P}({\bf z}r_{\text{pass}})\notag\\
    &=& \sum_{{\bf cz}r_{\text{pass}}}
    \llbracket \mathbb{P}({\bf c}r_{\text{pass}}|{\bf z})>\delta\rrbracket
    \mathbb{P}({\bf c}|{\bf z}r_{\text{pass}})\llbracket r_{\text{pass}}=1\rrbracket
   \mathbb{P}({\bf z}r_{\text{pass}})\notag\\
    &=& \mathbb{P}(\mathbb{P}({\bf C}R_{\text{pass}}|{\bf Z})>\delta , R_{\text{pass}}=1)\notag\\  
    &\leq&\mathbb{P}(\mathbb{P}({\bf C}|{\bf Z})>\delta , R_{\text{pass}}=1 )\notag\\
    &\leq& \epsilon_{\text{p}}.
  \end{eqnarray}
 
  At this point we can also bound the TV distance
  conditional on passing. Since $\mathbb{P}^*(R_{\text{pass}}) =
  \mathbb{P}(R_{\text{pass}})$, we can apply \eqref{e:TVsameconditionals} and
  the above bound on the distance to get
  \begin{eqnarray}
    \epsilon_{\text{p}} &\geq& 
    \text{TV}\big(\mathbb{P}^*_{{\bf CZS}R_{\text{pass}}},\mathbb{P}_{{\bf CZS}R_{\text{pass}}}\big)\notag\\
    &=&
    \sum_{r}  \text{TV}\big(\mathbb{P}^*_{{\bf CZS}|R_{\text{pass}}=r},\mathbb{P}_{{\bf CZS}|R_{\text{pass}}=r}\big)\mathbb{P}(R_{\text{pass}}=r)  \notag\\
    &=&
    \text{TV}\big(\mathbb{P}^*_{{\bf CZS}|R_{\text{pass}}=1},\mathbb{P}_{{\bf CZS}|R_{\text{pass}}=1}\big)\mathbb{P}(R_{\text{pass}}=1).
  \end{eqnarray}
  We conclude that
  \begin{equation}\label{e:step1conclusion}
    \text{TV}\big(\mathbb{P}^*_{{\bf CZS}|R_{\text{pass}}=1},\mathbb{P}_{{\bf CZS}|R_{\text{pass}}=1}\big)\leq \epsilon_{\text{p}}/\mathbb{P}(R_{\text{pass}}=1)\leq\epsilon_{\text{p}}/\kappa.
  \end{equation}

  For the second main step, we need the average ``guessing
  probability'' of ${\bf C}$ given ${\bf Z}$ conditional on
  $\{R_{\text{pass}}=1\}$.  This is given by
  \begin{eqnarray}
    \sum_{{\bf z}} \max_{{\bf c}}(\mathbb{P}^*({\bf c}|{\bf z},R_{\text{pass}}=1))\mathbb{P}({\bf z}|R_{\text{pass}}=1) &\le&
    \sum_{{\bf z}} \frac{\delta}{\mathbb{P}(R_{\text{pass}}=1|{\bf z})}\mathbb{P}({\bf z}|R_{\text{pass}}=1) \notag\\
    &=& \delta \sum_{{\bf z}}\frac{\mathbb{P}({\bf z})}{\mathbb{P}(R_{\text{pass}}=1)} \notag \\
    &\leq& \delta/\kappa
  \end{eqnarray}
  Now we can apply Proposition 1 of Ref.~\cite{konig:2008}. The next
  lemma extracts the conclusion of this proposition in the form we
  need. It is obtained by substituting the variables and expressions
  in the reference as follows: $X\leftarrow {\bf C}$,
  $Y\leftarrow {\bf S}$, $E\leftarrow {\bf Z}$,
  $\mathsf{E}(X,Y) \leftarrow \text{Ext}({\bf CS})$,
  $k\leftarrow -\log_{2}(\delta/\kappa)
  -\log_{2}(2/\epsilon_{\text{ext}})$,
  $\epsilon\leftarrow \epsilon_{\text{ext}}/2$ and the distributions
  are replaced with the corresponding ones that are conditional on
  $\{R_{\text{pass}}=1\}$.  The guessing entropy in the reference is
  the negative logarithm of the guessing probability computed above.

  \begin{Lemma} 
    \begin{sloppy}
      Suppose that $\mathrm{Ext}$ is a strong extractor with
      parameters
      $(-\log_{2}(2\delta/(\kappa\epsilon_{\mathrm{ext}})),\epsilon_{\mathrm{ext}}/2,2n,d,t)$. Write
      ${\bf U}=\mathrm{Ext}({\bf CS})$.  Then we have the following
      bound:
      \begin{equation}
        TV\big (\mathbb{P}^*_{{\bf UZS}|R_{\mathrm{pass}}=1},         \mathbb{P}^{\mathrm{unif}}_{{\bf U}} \mathbb{P}_{{\bf S}} \mathbb{P}^*_{{\bf Z}|R_{\mathrm{pass}}=1}\big) \leq \epsilon_{\mathrm{ext}}. \label{e:lemmapm}
      \end{equation}
    \end{sloppy}
  \end{Lemma}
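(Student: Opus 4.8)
The plan is to obtain the lemma as the specialization of Proposition~1 of Ref.~\cite{konig:2008} to classical side information $\mathbf{Z}$, feeding in the bound on the average guessing probability of $\mathbf{C}$ given $\mathbf{Z}$ conditional on passing that was computed just above, namely
\[
\sum_{\mathbf{z}}\max_{\mathbf{c}}\mathbb{P}^*(\mathbf{c}\mid \mathbf{z},R_{\text{pass}}=1)\,\mathbb{P}(\mathbf{z}\mid R_{\text{pass}}=1)\le \delta/\kappa .
\]
With the substitutions listed in the main text, the reference's entropy parameter is $k=-\log_2(\delta/\kappa)-\log_2(2/\epsilon_{\text{ext}})$, which coincides exactly with the extractor's required min-entropy $\sigma=-\log_2\!\big(2\delta/(\kappa\epsilon_{\text{ext}})\big)$; since the guessing min-entropy $-\log_2(\delta/\kappa)$ exceeds this $\sigma$ by $\log_2(2/\epsilon_{\text{ext}})$, the proposition returns the extractor's own error $\epsilon=\epsilon_{\text{ext}}/2$ plus a contribution $\epsilon_{\text{ext}}/2$ from this slack, for the claimed total of $\epsilon_{\text{ext}}$. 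Confirming that these three expressions line up is the only arithmetic involved.

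If a self-contained argument is preferred, I would make the mechanism of the proposition explicit through a Markov-style split over the values $\mathbf{z}$ of $\mathbf{Z}$. Call $\mathbf{z}$ \emph{good} if $\max_{\mathbf{c}}\mathbb{P}^*(\mathbf{c}\mid\mathbf{z},R_{\text{pass}}=1)\le 2^{-\sigma}=2\delta/(\kappa\epsilon_{\text{ext}})$ and \emph{bad} otherwise. Applying Markov's inequality to the nonnegative random variable $\max_{\mathbf{c}}\mathbb{P}^*(\mathbf{C}=\mathbf{c}\mid\mathbf{Z},R_{\text{pass}}=1)$ — whose mean under $\mathbb{P}(\cdot\mid R_{\text{pass}}=1)$ is at most $\delta/\kappa$ by the displayed bound, using that $\mathbb{P}^*$ and $\mathbb{P}$ share the same $\mathbf{Z}$-marginal conditional on passing — shows the bad set has probability at most $(\delta/\kappa)/(2\delta/(\kappa\epsilon_{\text{ext}}))=\epsilon_{\text{ext}}/2$. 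For each good $\mathbf{z}$ the conditional law of $\mathbf{C}$ has min-entropy at least $\sigma$, so the defining property of the strong extractor, with error $\epsilon_{\text{ext}}/2$ and the independent uniform seed $\mathbf{S}$, yields $\mathrm{TV}\big(\mathbb{P}^*_{\mathbf{US}\mid\mathbf{z},R_{\text{pass}}=1},\mathbb{P}^{\text{unif}}_{\mathbf{U}}\mathbb{P}_{\mathbf{S}}\big)\le\epsilon_{\text{ext}}/2$, while for bad $\mathbf{z}$ this distance is trivially at most $1$. Because $\mathbf{S}$ is independent of $(\mathbf{C},\mathbf{Z},R_{\text{pass}})$, conditioning on $\{R_{\text{pass}}=1\}$ keeps it uniform and independent and makes the $\mathbf{ZS}$-marginals on the two sides of \eqref{e:lemmapm} agree, so \eqref{e:TVsameconditionals} expresses the left side of \eqref{e:lemmapm} as the $\mathbf{z}$-average of these conditional distances; this average is bounded by $\mathbb{P}(\text{bad})\cdot 1+\mathbb{P}(\text{good})\cdot\epsilon_{\text{ext}}/2\le \epsilon_{\text{ext}}/2+\epsilon_{\text{ext}}/2=\epsilon_{\text{ext}}$.

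The main obstacle — and the reason the $\epsilon_{\text{ext}}/2$ penalty is unavoidable — is that a strong extractor is only guaranteed to work on a source with a \emph{worst-case} min-entropy bound, whereas here the bound on $\mathbf{C}$ holds only on average over $\mathbf{Z}$: individual settings slices $\mathbb{P}^*(\mathbf{C}\mid\mathbf{z},R_{\text{pass}}=1)$ may be arbitrarily peaked. The Markov split converts the average bound into a worst-case bound on a subset of probability $\ge 1-\epsilon_{\text{ext}}/2$, at the cost of that slack. The remaining work is bookkeeping: confirming $k=\sigma$, that the two error terms sum to exactly $\epsilon_{\text{ext}}$, and that passing to $\mathbb{P}(\cdot\mid R_{\text{pass}}=1)$ preserves both the independence and uniformity of $\mathbf{S}$ and the agreement of $\mathbf{ZS}$-marginals needed to invoke \eqref{e:TVsameconditionals}. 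None of this is deep, but matching the constants to the extractor parameters $\sigma=-\log_2\!\big(2\delta/(\kappa\epsilon_{\text{ext}})\big)$ and $\epsilon=\epsilon_{\text{ext}}/2$ stated in the theorem is where the attention goes.
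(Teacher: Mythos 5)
Your first paragraph matches the paper exactly: the paper obtains the lemma by substituting into Proposition~1 of Ref.~\cite{konig:2008} with the stated variable identifications and by checking that $k=-\log_2(\delta/\kappa)-\log_2(2/\epsilon_{\text{ext}})$ equals $\sigma=-\log_2\bigl(2\delta/(\kappa\epsilon_{\text{ext}})\bigr)$, just as you do. Your second paragraph — the Markov split over $\mathbf{z}$ to convert the average guessing-probability bound $\delta/\kappa$ into a worst-case min-entropy bound of $\sigma$ on all but an $\epsilon_{\text{ext}}/2$-probability set, then invoking the strong-extractor guarantee on the good slices, the trivial bound $1$ on the bad ones, and \eqref{e:TVsameconditionals} to average — is not spelled out in the paper, but it is a correct reconstruction of the mechanism inside the cited proposition, and your justification that $\mathbb{P}^*$ and the product target share the $\mathbf{ZS}$-marginal conditional on passing (via independence of $\mathbf{S}$ and $\mathbb{P}^*$'s agreement with $\mathbb{P}$ on $\mathbf{ZS}R_{\text{pass}}$) is exactly the bookkeeping needed.
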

  
  To apply the lemma, we obtain $\text{Ext}$ by the TMPS algorithm
  with the parameters in the lemma. Expanding the logarithms as
  $\sigma=-\log_{2}(\delta)+\log_{2}(\kappa) +
  \log_{2}(\epsilon_{\text{ext}})-1$ and
  $\log_{2}(\epsilon)=\log_{2}(\epsilon_{\text{ext}})-1$ and
  substituting in Eq.~\ref{e:trev1} gives the requirement
  \begin{equation}
    t+4\log_2 t\le -\log_{2}(\delta) + \log_{2}(\kappa) + 5\log_2(\epsilon_{\text{ext}})-11,
  \end{equation}
  as asserted in the Protocol Soundness Theorem.
  The number of seed bits $d$ is obtained from Eq.~\ref{e:trev2}.

  It remains to determine the overall TV distance conditional on
  passing. Applying \eqref{e:classicalprocessing} with
  $V = {\bf C, Z, S }$ and $F$ defined as
  $F({\bf C, Z, S})=\big(\text{Ext}({\bf C,S}), {\bf Z}, {\bf
    S}\big)$, and applying \eqref{e:step1conclusion}, we have
\begin{equation}\label{e:uzs}
\text{TV}\big(\mathbb{P}^*_{{\bf UZS}|R_{\text{pass}}=1},\mathbb{P}_{{\bf UZS}|R_{\text{pass}}=1}\big)\leq
\text{TV}\big(\mathbb{P}^*_{{\bf CZS}|R_{\text{pass}}=1},\mathbb{P}_{{\bf CZS}|R_{\text{pass}}=1}\big)\leq \epsilon_{\text{p}}/\kappa.
\end{equation}
Then by \eqref{e:TIforTV}, \eqref{e:lemmapm}, and \eqref{e:uzs} we have
\begin{equation}
  \text{TV}\big(\mathbb{P}_{{\bf UZS}|R_{\text{pass}}=1}, \mathbb{P}^{\text{unif}}_{{\bf U}}\mathbb{P}^{\text{unif}}_{{\bf S}}\mathbb{P}^*_{{\bf Z}|R_{\text{pass}}=1}\big)
  \le \epsilon_{\text{ext}} +\epsilon_{\text{p}}/\kappa.
\end{equation}
As $\mathbb{P}^*_{{\bf Z}|R_{\text{pass}}=1}=\mathbb{P}_{{\bf Z}|R_{\text{pass}}=1}$, the statement of the theorem follows.
\end{proof}

\subsection{Protocol Application Details}
\label{st:actual}

Ref.~\cite{shalm:2015} reported data sets from six experiments
obtained under space-like separation. Four additional data sets were
obtained: two in which an extra delay was purposefully implemented so
that space-like separation was not enforced, and two space-like
separated ``blind'' data sets initially not subject to any
analysis. In these data sets, each trial involved multiple pulses of
the source laser, each detected separately. For the analyses
in~\cite{shalm:2015}, the trial outcomes were determined by
aggregating a consecutive sequence of $k$ pulses, with outcome ``$+$''
if there was a detection in any one of these pulses and ``$0$''
otherwise.  For our work, all analyses used $k=7$. This was the
largest group of pulses certified to be space-like separated in the
data runs where space-like separation was enforced.

We first investigated training data from seven of the non-blind sets,
labeled 03\_43, 19\_45, 21\_15, 22\_20, 23\_55 (XOR 1), 00\_25 (XOR
2), and 02\_31 (XOR 3) in the online repository of \cite{shalm:2015},
running the full protocol only on XOR 3.  We then applied the protocol
to the blind data sets.  One of the six experiments reported in
Ref.~\cite{shalm:2015} was temporarily unavailable in the online
repository and was omitted from our investigation.

For each of the non-blind data sets, we determined the Bell function
$T$ from training data consisting of the first $5\times 10^{7}$ trials
as explained in \ref{st:T}.  We chose $5\times 10^7$ trials so that we
could obtain a $T$ using an accurate estimate of the experimental
distribution of measurement outcomes without sacrificing too much data
that could be used for randomness extraction.  Assuming i.i.d.\ trials
and Gaussian statistics according to the central limit theorem, we
then inferred the expected value $n\mu$ and variance $n\sigma^{2}$ of
$\sum_{i=1}^{n}\ln(T_{i})$ on the remaining trials, where $n$ and
  $\mu$ were calculated according to the distribution obtained from
  the optimization problem of \eqref{e:convexfindNS}. Note that under
these assumptions, we treat $\sum_{i=1}^{n}\ln(T_{i})$ as if it were a
sum of independent and bounded RVs. Since
$V = \exp\left(\sum_{i=1}^{n}\ln (T_{i})\right)$ we can then calculate
$\Vthresh$ according to the $0.95$ rule described in the main
text. That is, we set $\Vthresh=e^{n\mu-1.645\sqrt{n}\sigma}$.  Based
on the results, we found that only the last data set, XOR 3, was
anticipated to yield sufficient randomness at low error, consistent
with expectations based on the $p$-values against LR given in
Ref.~\cite{shalm:2015}.  We therefore applied the full protocol only
to XOR 3. We note that while we had prior knowledge of general
statistical features of XOR 3 from the analysis in
Ref.~\cite{shalm:2015}, once the relevant parameters were determined
from the training data, the protocol was run only once on the
remaining data of XOR 3.

We next give details for our analysis of XOR 3, then describe how the
protocol performed on the two blind data sets, and finish with a
discussion of results from tests for non-uniformity in the settings
and for signaling in the settings-conditional outcomes.

Data set XOR 3 consists of 182,161,215 trials. The counts for each
trial result from the first $5\times 10^{7}$ trials are shown in Table
\ref{t:rawcounts}.  The maximum likelihood non-signaling distribution
corresponding to these counts is shown in Table~\ref{t:nosig}. We
determined $T$ from this distribution, the values of $T$ are shown in
Table~\ref{t:PBR} of the main text. The parameter $m$ for $T$ is
$0.0120275$.
  \begin{table}[h]\centering\caption{Result counts for the first $5\times
        10^{7}$ trials of XOR 3.}\label{t:rawcounts}
 \begin{tabular}{ r|c|c|c|c| }
 \multicolumn{1}{r}{}
  &  \multicolumn{1}{c}{$ab=\text{++}$}
 &  \multicolumn{1}{c}{$ab=\text{+}0$}
 &  \multicolumn{1}{c}{$ab=0\text{+}$} 
   &  \multicolumn{1}{c}{$ab=00$}
\\
  \cline{2-5}
   $xy=00$&2483 & 1341 & 1266 & 12496049 \\
 \cline{2-5}
   $xy=01$& 2645 & 1113 & 9095 & 12489487 \\
 \cline{2-5}
   $xy=10$& 2602 & 8295 & 1076 & 12483646 \\ 
 \cline{2-5}
   $xy=11$& 44 & 10869 & 11768 & 12478221 \\
 \cline{2-5}
 \end{tabular}
\end{table} 
 \begin{table}[h]\centering\caption{
Maximum likelihood non-signaling distribution according to
the counts in Table~\ref{t:rawcounts}.}\label{t:nosig}
 \begin{tabular}{ r|c|c|c|c| }
 \multicolumn{1}{r}{}
  &  \multicolumn{1}{c}{$ab=\text{++}$}
 &  \multicolumn{1}{c}{$ab=\text{+}0$}
 &  \multicolumn{1}{c}{$ab=0\text{+}$} 
   &  \multicolumn{1}{c}{$ab=00$}
\\
  \cline{2-5}
   $xy=00$&  0.000049006  & 0.000026663   &  0.000025112  &  0.249899219 \\
 \cline{2-5}
   $xy=01$&  0.000053304  &  0.000022364  &  0.000182341  &  0.249741991 \\
 \cline{2-5}
   $xy=10$&  0.000052435  &  0.000165906  &  0.000021683  &  0.249759976 \\ 
 \cline{2-5}
   $xy=11$& 0.000000876   &  0.000217465  &  0.000234769  &  0.249546890 \\ 
 \cline{2-5}
 \end{tabular}
\end{table} 

The $0.95$ rule for determining $\Vthresh$ given that there are
132,161,215 trials for the protocol yields
$\Vthresh=1.66\times10^6$. While this suggests that we can set
$\kappa=0.95$, the i.i.d.\ assumption cannot be met exactly due to
experimental drift and intermittent faults. This suggests a more
conservative choice for $\kappa$. For $\Vthresh=1.66\times10^6$ and
$\epsilon_{\text{p}}$ and $\epsilon_{\text{ext}}$ in the fixed ratio
$9{:}1$, one can choose $\kappa$ to be as low as
$0.33$ while still meeting the benchmark that \eqref{e:mttrev1st}
allows for an output string of at least $t=256$ bits within
  $\epsilon_{\text{fin}}=\epsilon_{\text{p}}/\kappa+\epsilon_{\text{ext}}=0.001$
  of uniform. We observed that for the six earlier data sets for
which we computed $T$ from training, the corresponding value of
$\Vthresh$ was exceeded five times by the running product of the
$T_{i}$'s on the remaining trials. Hence a choice of $\kappa=0.33$
does not seem unreasonably high. Our choice to fix the ratio $9{:}1$
for $\epsilon_{\text{p}}$ and $\epsilon_{\text{ext}}$ was based on
numerical studies optimizing $t$ in \eqref{e:mttrev1st} with various
fixed values of $\kappa$ and $\epsilon_{\text{fin}}$. This ratio
generally performed well, so we used it for all instances of the
protocol.

Throughout, we did not consider the length $d$ of the seed in making
our choices and determined $d$ from the other parameters according
to \eqref{e:trev2}. For applying the extractor to XOR 3, we
generated 73,947 seed bits. The seed bits were obtained after the
experiment from a random number generator similar to one used to
select the settings \cite{shalm:2015}.  The independence assumption
$\mathbb{P}({\bf S},{\bf C},{\bf Z},R_{\text{pass}})=\mathbb{P}({\bf
  S})\mathbb{P}({\bf C},{\bf Z},R_{\text{pass}})$ required by the
Protocol Soundness Theorem is therefore justified, and this is
consistent with our suggestion in the main text that the seed bits
can be obtained from additional instances of the RVs $X_{i}$ (in
which case the needed independence follows from
\eqref{e:mtunifsettings}).  It took 128 seconds for our computer to
construct the extractor according to the TMPS algorithm and generate
the explicit final output string.

As reported at the online data repository of \cite{shalm:2015}, two
``blind'' data runs were taken at the time of the original experiment
and set aside for analysis at a later date. No analysis was performed
on these data sets until the work reported here.  We refer to these
data sets as ``Blind 1'' and ``Blind 2,'' according to the order in
which they were taken.

Blind 1 consists of 356,464,574 trials. We followed the training
procedure described above on the initial $5\times 10^{7}$ trials.
For the computed value of $\Vthresh$, our
benchmark values of $t$ and $\epsilon_{\text{fin}}$ could not be met,
so we did not formally run the protocol on the remaining trials.  Upon
examining these trials, we found that the final product $V$ of the
$T_{i}$ was $2.023\times 10^7$.  The maximum running product was
$6.884\times 10^{7}$, and the curves for randomness that could have
been extracted from Blind 1 in Fig.~\ref{f:blinds} are based on this
value.

Blind 2 consists of $182,837,253$ trials. The first $2\times 10^7$
trials were set aside for training. This smaller number was chosen
because Blind 2 was taken directly after XOR 3 and expected to be
consistent with this prior data set. The function $T$ obtained from
training performed well when retroactively applied to XOR 3, prompting
us to apply the protocol to the remainder of the data set without
further training. From the training procedure, we obtained
$\Vthresh=8.515\times 10^{8}$, which was predicted to be more than
sufficient for extracting $256$ random bits within $0.001$ of uniform.
We then ran the protocol, but found that the running product of the
$T_{i}$ never exceeded $\Vthresh$. This is therefore an
instance where the protocol aborted.  The failure to exceed
$\Vthresh$ is explained by a dramatic change of the results
statistics after approximately $1.08 \times 10^8$ trials, entering a
non-violating regime that drives $V$ below $10^{-300}$. However, prior
to this change, the statistics demonstrate violation, reaching a
maximum running product of $17528.87$ at trial number 73,057,106.  The
curves for randomness that could have been extracted from Blind 2 in
Fig.~\ref{f:blinds} are based on this value.

Tests for non-uniformity of the settings distribution were performed
and reported in Ref.~\cite{shalm:2015}. The tests showed that a
combination of uncontrolled environmental variables and the
synchronization electronics introduced small biases, particularly in
Alice's settings.  The sizes of the biases were found to be
inconsistent over time, and so their existence or magnitude in any
particular data run cannot be precisely inferred based on results from
other data runs or post-experiment testing. Nevertheless, it is
informative to perform statistical consistency checks for uniformity
of the settings distribution (\eqref{e:mtunifsettings}) within each
run, especially in consideration of the known tendency for small
biases.  Using the tests described in Ref.~\cite{shalm:2015}, we
examined the individual unbiasedness of $X$ and $Y$, and then we
performed a separate test of the independence of $X$ and $Y$. For
these tests we used statistics whose asymptotic distributions would
approach the standard normal with mean $0$ and variance $1$, if the
trials were i.i.d.  For Blind 1, these statistics yielded two-tailed
p-values for unbiasedness of $X$ and $Y$ of $9.9\times 10^{-6}$ and
0.65 respectively, and 0.33 for independence. For Blind 2, these
p-values were 0.04, 0.42, and 0.41, and for XOR 3, they were 0.05,
0.27, and 0.90.

The observed bias for Alice's setting $X$ yielding the notably
significant result in Blind 1 was small: the inferred probability of
setting $1$ is $0.50012 \pm 0.00003$.  A possible concern is that the
settings bias may have been in a direction to systematically bias $V$
upward.  To check for this possibility, we computed
$\mathbb{E}(\ln(T))_\mathbb{F}=5.49\times 10^{-8}$ according to the distribution $\mathbb{F}$
obtained directly from the observed frequencies. We then obtained a
normalized distribution $\mathbb{F}'$ by the transformation
$\mathbb{F}'(a,b,x,y) = (1/4)\mathbb{F}(a,b,x,y)/ \mathbb{F}(x,y)$, so $\mathbb{F}'$ has the same
settings-conditional probabilities as $\mathbb{F}$ but $\mathbb{F}'(x,y)=1/4$ for all
settings $x,y$. We found
$\mathbb{E}(\ln(T))_{\mathbb{F}'}=5.52 \times 10^{-8}>\mathbb{E}(\ln(T))_\mathbb{F}$, supporting the
conclusion that the empirical overall direction of the bias in Blind 1
did not favor larger $V$.  Nonetheless, this motivates potential
future work to strengthen the protocol to allow for a relaxed version
of \eqref{e:mtunifsettings} where the settings distribution is only
assumed to be within some $\epsilon$ of uniform, such as is done in
the statistical arguments of \cite{hensen:2015, shalm:2015,
  giustina:2015, rosenfeld:2016} for falsifying LR. We do not pursue here a precise
quantification of how such a relaxation would decrease the certifiable
entropy in the entropy production theorem.

Tests for signaling were also reported in
Ref.~\cite{shalm:2015}. There are four signaling equalities that can
be independently tested: $\mathbb{P}(A|X=0,Y)=\mathbb{P}(A|X=0)$, $\mathbb{P}(A|X=1,Y)=\mathbb{P}(A|X=1)$, $\mathbb{P}(B|X,Y=0)=\mathbb{P}(B|Y=0)$, and $\mathbb{P}(B|X,Y=1)=\mathbb{P}(B|Y=1)$.  These tests
performed on Blind 1 and Blind 2 showed only expected statistical
variation.  Specifically, the p-values for these statistics were
$0.72$, $0.03$, $0.50$, and $0.84$ for XOR 3, $0.98$, $0.09$, $0.87$
and $0.83$ for Blind 1, and $0.01$, $0.50$, $0.04$, and $0.13$ for
Blind 2.

\ignore{$z$-scores are $-0.36$, $2.18$, $-0.68$, and $0.20$
  for XOR 3, $-0.03$, $1.67$, $0.17$ and $0.22$ for Blind 1, and
  $-2.46$, $-0.67$, $-2.04$, and $-1.53$ for Blind 2.}

\subsection{Performance of Previous Protocols.}
\label{st:previous}

Other protocols in the literature could not be used for our data sets
for the following reasons.  Protocols in Refs.
\cite{miller:2014,colbeck:2011,chung:2014,coudron:2014,miller2:2014}
either apply to different experimental setups or provide only
asymptotic security results as the number of trials $n$ approaches
infinity. The analysis of \cite{thinh:2016} applies to i.i.d.\
scenarios, and the protocol of \cite{vazirani:2012} requires systems
that achieve Bell violations much higher than ours. In contrast, the
protocols of Refs.~\cite{pironio:2010,arnon:2016} would yield
randomness for our results' distribution given a sufficiently large
number of trials. However, they are ineffective for the numbers of
trials in our data sets, which we illustrate with a heuristic
argument. Both protocols are based on the Clauser-Horne-Shimony-Holt
(CHSH) Bell function \cite{CHSH}
\begin{equation}
T^c(a,b,x,y) =\begin{cases}
1 & \text{ if } (x,y)\ne (1,1) \text{ and } a=b\\
1 & \text{ if } (x,y)= (1,1) \text{ and } a\ne b\\
0 & \text{ otherwise. }
\end{cases}
\end{equation}
The statistic $\overline {T^c}=n^{-1}\sum_{i=1}^nT^c_i$
used by these protocols for witnessing accumulated violation satisfies
$\mathbb{E}(\overline {T^c}) \le .75$ under LR, while $\mathbb{E}(\overline
{T^c})=0.750008165$ for the distribution in Table \ref{t:nosig}. The
completely predictable LR theory that only produces ``00'' outcomes
regardless of the settings satisfies $\mathbb{E}(\overline {T^c}) = .75$, but
in an experiment of $n=132,161,215$ trials, this theory can produce a
value of $\overline {T^c}$ exceeding $0.750008165$ with probability
above $0.4$.  Thus, based on this statistic alone, we cannot infer the
presence of any low-error randomness.

The protocol of Ref.~\cite{pironio:2010} (the PM protocol for short, see
\cite{pironio:2013,fehr:2013} for amendments), can be modified to work
with any Bell function, and there are methods for obtaining better
Bell functions \cite{nieto:2014,bancal:2014} or simultaneously using a
suite of Bell functions \cite{nieto:2016}. Here, we demonstrate that
for any choice of Bell function, the method of \cite{pironio:2010} as
refined in \cite{pironio:2013} cannot be expected to effectively
certify randomness from an experiment distributed according to Table
\ref{t:nosig} unless the number of trials exceeds $2.4\times 10^{10}$,
which is much larger than the number of trials in our experiments.

For the most informative comparison to our protocol, we consider the
PM protocol without their additional constraint that the distribution
be induced by a quantum state. To derive a bound on the performance of
the PM protocol, we refer to Theorem 1 of \cite{pironio:2013}.  This
theorem involves a choice of Bell function denoted by $I$ (analogous
to our $T$), a threshold $J_{m}$ (analogous to our
$\Vthresh$) to be exceeded by the Bell estimator $\bar{I}
  = n^{-1}\sum_{i=1}^n I_i$, and a function $f$ that we discuss below.
To be able to extract some randomness, the theorem requires that
\begin{equation}\label{e:pirbound}
nf(J_m-\mu) > 0.
\end{equation}
The parameter $\mu$ is given by $(I_{\text{max}} +
I_{\text{NS}})\sqrt{(2/n)\ln(1/\epsilon)}$ where $I_{\text{max}}$ is
the largest value in the range of the Bell function $I$,
$I_{\text{NS}}\leq I_{\text{max}}$ is the largest possible expected
value of $I$ for non-signaling distributions, and $0<\epsilon\leq 1$
is a free parameter that is added to the TV distance from uniform for
the final output string. Smaller choices of $\epsilon$, which is
analogous to our $\epsilon_{\text{p}}$, are desirable but require
larger $n$ for the constraint \eqref{e:pirbound} to be positive as we
will see below. We also note that \eqref{e:pirbound} is a necessary
but not sufficient condition for extracting randomness; in particular,
we ignore the negative contribution from the parameter $\epsilon'$ of
\cite{pironio:2013} (somewhat analogous to our $\kappa$) as well as
any error introduced in the extraction step.

For \eqref{e:pirbound}, we can without loss of generality consider
only Bell functions for which $0 \le I_L < I_{\text{NS}}\le
I_{\text{max}}$, where $I_L$ is the maximum expectation of $I$ for LR
distributions. Further, because the relevant quantities below are
invariant when the Bell function is rescaled, we can assume $I_{L}=1$.
According to Ref.~\cite{pironio:2013}'s Eq.~8 and the following
paragraph, we can write
$f(x)=-\log_{2}(g(x))$, where $g$ is monotonically decreasing and
concave, and satisfies
\begin{equation}\label{e:maxpir}
\max_{ab}\mathbb{P}(ab|xy)\leq g(\mathbb{E}(I)_{\mathbb{P}})
\end{equation}
for all $xy$ and non-signaling distributions $\mathbb{P}$. (Recall that we are
not using the stronger constraint that $\mathbb{P}$ be induced by a quantum
state.)  According to \eqref{e:maxprobbound} we can define
$g(x)=1+(1-x)/(2(I_{\text{NS}}-1))$. Later we argue that
this definition of $g$ cannot be improved. Substituting into
\eqref{e:pirbound} we get the inequality
\begin{equation}\label{e:pirminbound}
  - n \log_2\left[1+\frac{1-J_m  + (I_{\text{max}} + I_{\text{NS}})\sqrt{\frac{2}{n}\ln{\frac{1}{\epsilon}}}}{2(I_{\text{NS}}-1)}\right] >0.
\end{equation}
Since $2(I_{\text{NS}}-1)$ is
positive, this is equivalent to 
\begin{equation}\label{e:nbound1}
  \sqrt{\frac{2}{n}\ln{\frac{1}{\epsilon}}}<\frac{J_m-1}{I_{\text{max}}+I_{\text{NS}}}.
\end{equation}
Noting that $I_{\text{max}}+I_{\text{NS}}\ge 2I_{\text{NS}}$, this implies
\begin{equation}
  \sqrt{\frac{2}{n}\ln{\frac{1}{\epsilon}}} <\frac{J_m-1}{2I_{\text{NS}}}.
\end{equation}
Thus, the number of trials needed to extract randomness  by
the PM protocol is bounded below according to 
\begin{equation}
  \label{e:pirreq}
  n> 8\frac{\ln(1/\epsilon)I_{\text{NS}}^{2}}{(J_{m}-1)^{2}}.
\end{equation}
For a given anticipated experimental distribution $\mathbb{P}_{\text{ant}}$,
$J_m$ is best chosen to be at most $\mathbb{E}(I)_{\mathbb{P}_{\text{ant}}}$.  Otherwise,
the probability that $\bar I$ exceeds $J_{m}$ is small. However, for
the maximum amount of extractable randomness, $J_{m}$ should be close
to $\mathbb{E}(I)_{\mathbb{P}_{\text{ant}}}$.  Consider the inferred distribution of XOR 3
shown in Table~\ref{t:nosig}.  By following the procedure given in
Section 2 of \cite{bierhorst:2016}, we can write this distribution as
a convex combination of a PR box with weight $p=3.266\times 10^{-5}$
and an LR distribution with weight $1-p$. From this we see that one should choose $J_{m} \leq \mathbb{E}(I)_{\mathbb{P}_{\text{ant}}}
  = pI_{\text{NS}}+(1-p) \leq pI_{\text{NS}}+1$.  Substituting into
\eqref{e:pirreq} and using $\epsilon\le 0.05$ (a rather
high bound on the allowable TV distance from uniform) gives
\begin{equation}
  \label{e:prreq}
  n>8\frac{\ln(1/\epsilon)}{p^{2}}\geq 2.4\times 10^{10},
\end{equation}
which is substantially larger than the number of trials in our data
sets.  

To finish our argument that the PM protocol cannot improve on this
bound under our assumptions, consider the definition of $g$.  If we
could find a function $g'\leq g$ with $g'(x)<g(x)$ for some
$x\in(1,I_{\text{NS}}]$, then $f=-\log_{2}(g')$ might yield a smaller
lower bound on $n$.  Note that for $x\leq 1$, $g'(x)\geq g'(1)$ and
$g'(1)$ must be at least 1 because, referring to \eqref{e:maxpir},
there is a conditionally deterministic LR distribution $\mathbb{P}$ satisfying
$\mathbb{E}(I)_{\mathbb{P}}=1$ and $\max_{ab}\mathbb{P}(ab|xy) =1$. Hence \eqref{e:pirbound} cannot
be satisfied for arguments $x$ of $f(x)=-\log_2(g'(x))$ with
$x\leq 1$. Given $x\in(1,I_{\text{NS}}]$, write
$x=(1-p)+pI_{\text{NS}}$. Let $\mathbb{Q}$ be the PR box achieving
$\mathbb{E}(I)_{\mathbb{Q}}=I_{\text{NS}}$ and $\mathbb{Q}'$ a conditionally deterministic LR
theory achieving $\mathbb{E}(I)_{\mathbb{Q}'}=1$.  Then
$\mathbb{E}(I)_{(1-p)\mathbb{Q}'+p\mathbb{Q}'}=x$. Furthermore, there is a setting $xy$ at which
the LR theory's outcome is inside the support of the PR box's
outcomes. To see this, by symmetry it suffices to consider the PR box
of \eqref{e:PRbox}. Its outcomes are opposite at setting $11$ and
identical at the other three. A deterministic LR theory's outcomes are
opposite at an even number of settings, so either it is opposite at
setting $11$, or it is identical at one of the others. For setting
$xy$, the bound in \eqref{e:maxpir} is achieved for our definition of
$g$. Hence any other valid replacement $g'$ for $g$ must satisfy
$g'(x)\ge g(x)$ for $x\in(1,I_{\text{NS}}]$, and so \eqref{e:pirbound}
with $f(x)=-\log_2(g'(x))$ implies \eqref{e:pirbound} with
$f(x)=-\log_2(g(x))$. Thus the lower bound on $n$ derived above will
apply to $g'$ as well.


\bibliographystyle{unsrtnat}
\bibliography{metabib}

\begin{thebibliography}{40}
\providecommand{\natexlab}[1]{#1}
\providecommand{\url}[1]{\texttt{#1}}
\expandafter\ifx\csname urlstyle\endcsname\relax
  \providecommand{\doi}[1]{doi: #1}\else
  \providecommand{\doi}{doi: \begingroup \urlstyle{rm}\Url}\fi

\bibitem[Fischer(2011)]{fischer:2011}
M.~J. Fischer.
\newblock A public randomness service.
\newblock In \emph{SECRYPT 2011}, pages 434--38, 2011.

\bibitem[Bera et~al.(2016)Bera, Ac{\'\i}n, Ku\'s, Mitchell, and
  Lewenstein]{bera:2016}
M.~N. Bera, A.~Ac{\'\i}n, M.~Ku\'s, M.~Mitchell, and M.~Lewenstein.
\newblock Randomness in quantum mechanics: philosophy, physics and technology.
\newblock arXiv:1611.02176 [quant-ph], 2016.

\bibitem[Pironio and Massar(2013)]{pironio:2013}
S.~Pironio and S.~Massar.
\newblock Security of practical private randomness generation.
\newblock \emph{Phys. Rev. A}, 87:\penalty0 012336, Jan 2013.

\bibitem[Miller and Shi(2016)]{miller:2014}
C.A. Miller and Y.~Shi.
\newblock Robust protocols for securely expanding randomness and distributing
  keys using untrusted quantum devices.
\newblock \emph{J. ACM}, 63\penalty0 (4):\penalty0 33:1--33:63, 2016.

\bibitem[Colbeck and Kent(2011)]{colbeck:2011}
R.~Colbeck and A.~Kent.
\newblock Private randomness expansion with untrusted devices.
\newblock \emph{J. Phys. A: Math. Theor.}, 44\penalty0 (9):\penalty0 095305,
  2011.

\bibitem[Pironio et~al.(2010)]{pironio:2010}
S.~Pironio et~al.
\newblock Random numbers certified by {B}ell's theorem.
\newblock \emph{Nature}, 464:\penalty0 1021--4, 2010.

\bibitem[Vazirani and Vidick(2012)]{vazirani:2012}
U.~Vazirani and T.~Vidick.
\newblock Certifiable quantum dice - or, exponential randomness expansion.
\newblock In \emph{STOC'12 Proceedings of the 44th Annual ACM Symposium on
  Theory of Computing}, page~61, 2012.

\bibitem[Fehr et~al.(2013)Fehr, Gelles, and Schaffner]{fehr:2013}
S.~Fehr, R.~Gelles, and C.~Schaffner.
\newblock Security and composability of randomness expansion from {B}ell
  inequalities.
\newblock \emph{Phys. Rev. A}, 87:\penalty0 012335, Jan 2013.

\bibitem[Chung et~al.(2014)Chung, Shi, and Wu]{chung:2014}
K.-M. Chung, Y.~Shi, and X.~Wu.
\newblock Physical randomness extractors: Generating random numbers with
  minimal assumptions.
\newblock arXiv:1402.4797 [quant-ph], 2014.

\bibitem[Nieto-Silleras et~al.(2014)Nieto-Silleras, Pironio, and
  Silman]{nieto:2014}
O.~Nieto-Silleras, S.~Pironio, and J.~Silman.
\newblock Using complete measurement statistics for optimal device-independent
  randomness evaluation.
\newblock \emph{New Journal of Physics}, 16\penalty0 (1):\penalty0 013035,
  2014.

\bibitem[Bancal et~al.(2014)Bancal, Sheridan, and Scarani]{bancal:2014}
J.-D. Bancal, L.~Sheridan, and V.~Scarani.
\newblock More randomness from the same data.
\newblock \emph{New Journal of Physics}, 16\penalty0 (3):\penalty0 033011,
  2014.

\bibitem[Thinh et~al.(2016)Thinh, de~la Torre, Bancal, Pironio, and
  Scarani]{thinh:2016}
L.P. Thinh, G.~de~la Torre, J.-D. Bancal, P.~Pironio, and V.~Scarani.
\newblock Randomness in post-selected events.
\newblock \emph{New Journal of Physics}, 18\penalty0 (3):\penalty0 035007,
  2016.

\bibitem[Bell(1964)]{BELL}
J.~Bell.
\newblock On the {E}instein {P}odolsky {R}osen paradox.
\newblock \emph{Physics}, 1:\penalty0 195--200, 1964.

\bibitem[Bell et~al.(1985)Bell, Shimony, Horne, and Clauser]{bell:1985}
J.~S. Bell, A.~Shimony, M.~A. Horne, and J.~F. Clauser.
\newblock An exchange on local beables.
\newblock \emph{Dialectica}, 39\penalty0 (2):\penalty0 85--96, 1985.

\bibitem[Popescu and Rohrlich(1994)]{PRBOX}
S.~Popescu and D.~Rohrlich.
\newblock Quantum nonlocality as an axiom.
\newblock \emph{Found. Phys.}, 24\penalty0 (3):\penalty0 379--85, 1994.

\bibitem[Pearle(1970)]{pearle:1970}
Philip~M. Pearle.
\newblock Hidden-variable example based upon data rejection.
\newblock \emph{Phys. Rev. D}, 2:\penalty0 1418--1425, Oct 1970.

\bibitem[Hensen et~al.(2015)]{hensen:2015}
B.~Hensen et~al.
\newblock Loophole-free {Bell} inequality violation using electron spins
  separated by 1.3 km.
\newblock \emph{Nature}, 526:\penalty0 682, 2015.

\bibitem[Shalm et~al.(2015)]{shalm:2015}
L.~K. Shalm et~al.
\newblock Strong loophole-free test of local realism.
\newblock \emph{Phys. Rev. Lett.}, 115:\penalty0 250402, Dec 2015.

\bibitem[Giustina et~al.(2015)]{giustina:2015}
M.~Giustina et~al.
\newblock Significant-loophole-free test of bell's theorem with entangled
  photons.
\newblock \emph{Phys. Rev. Lett.}, 115:\penalty0 250401, Dec 2015.

\bibitem[Rosenfeld et~al.(2016)Rosenfeld, Burchardt, Garthoff, Redeker,
  Ortegel, Rau, and Weinfurter]{rosenfeld:2016}
W.~Rosenfeld, D.~Burchardt, R.~Garthoff, K.~Redeker, N.~Ortegel, M.~Rau, and
  H.~Weinfurter.
\newblock Event-ready {B}ell-test using entangled atoms simultaneously closing
  detection and locality loopholes.
\newblock arXiv:1611.04606 [quant-ph], 2016.

\bibitem[Brunner et~al.(2014)Brunner, Cavalcanti, Pironio, Scarani, and
  Wehner]{BBP}
Nicolas Brunner, Daniel Cavalcanti, Stefano Pironio, Valerio Scarani, and
  Stephanie Wehner.
\newblock Bell nonlocality.
\newblock \emph{Rev. Mod. Phys.}, 86:\penalty0 419--78, Apr 2014.

\bibitem[Navascu\'es et~al.(2008)Navascu\'es, Pironio, and
  Ac\'{\i}n]{navascues:2008}
M.~Navascu\'es, S.~Pironio, and A.~Ac\'{\i}n.
\newblock A convergent hierarchy of semidefinite programs characterizing the
  set of quantum correlations.
\newblock \emph{New Journal of Physics}, 10\penalty0 (7):\penalty0 073013,
  2008.

\bibitem[Trevisan(2001)]{trevisan:2001}
L.~Trevisan.
\newblock Extractors and pseudorandom generators.
\newblock \emph{J. ACM}, 48\penalty0 (4):\penalty0 860--79, 2001.

\bibitem[Mauerer et~al.(2012)Mauerer, Portmann, and Scholz]{mauerer:2012}
W.~Mauerer, C.~Portmann, and V.~Scholz.
\newblock A modular framework for randomness extraction based on {T}revisan's
  construction.
\newblock arXiv:1212.0520 [cs.IT], 2012.

\bibitem[Coudron and Yuen(2014)]{coudron:2014}
M.~Coudron and H.~Yuen.
\newblock Infinite randomness expansion with a constant number of devices.
\newblock In \emph{STOC'14 Proceedings of the 46th Annual ACM Symposium on
  Theory of Computing}, pages 427--36, 2014.

\bibitem[Arnon-Friedman et~al.(2016)Arnon-Friedman, Renner, and
  Vidick]{arnon:2016}
R.~Arnon-Friedman, R.~Renner, and T.~Vidick.
\newblock Simple and tight device-independent security proofs.
\newblock arXiv:1607.01797 [quant-ph], 2016.

\bibitem[Miller and Shi(2014)]{miller2:2014}
C.A. Miller and Y.~Shi.
\newblock Universal security for randomness expansion from the spot-checking
  protocol.
\newblock arXiv:1411.6608 [quant-ph], 2014.

\bibitem[Zhang et~al.(2011)Zhang, Glancy, and Knill]{zhang:2011}
Y.~Zhang, S.~Glancy, and E.~Knill.
\newblock Asymptotically optimal data analysis for rejecting local realism.
\newblock \emph{Phys. Rev. A}, 84:\penalty0 062118, 2011.

\bibitem[Trevisan and Vadhan(2000)]{trevisan:2000}
L~Trevisan and S~Vadhan.
\newblock Extracting randomness from samplable distributions.
\newblock In \emph{FOCS '00 Proceedings of the 41st Annual Symposium on
  Foundations of Computer Science}, Washington, DC, 2000. IEEE Computer
  Society.

\bibitem[Renner(2006)]{renner:2006}
R.~Renner.
\newblock \emph{Security of Quantum Key Distribution}.
\newblock PhD thesis, ETH, ETH, Switzerland, 2006.
\newblock arXiv:quant-ph/0512258.

\bibitem[Levin et~al.(2009)Levin, Peres, and Wilmer]{levin:2009}
David~Asher Levin, Yuval Peres, and Elizabeth~Lee Wilmer.
\newblock \emph{Markov chains and mixing times}.
\newblock American Mathematical Soc., 2009.

\bibitem[Pardo and Vajda(1997)]{pardo:1997}
MC~Pardo and Igor Vajda.
\newblock About distances of discrete distributions satisfying the data
  processing theorem of information theory.
\newblock \emph{IEEE transactions on information theory}, 43\penalty0
  (4):\penalty0 1288--1293, 1997.

\bibitem[Barrrett et~al.(2005)Barrrett, Linden, Massar, Pironio, Popescu, and
  Roberts]{barrett:2005}
J.~Barrrett, N.~Linden, S.~Massar, S.~Pironio, S.~Popescu, and D.~Roberts.
\newblock Nonlocal correlations as an information-theoretic resource.
\newblock \emph{Phys. Rev. A}, 71:\penalty0 022101, Feb 2005.

\bibitem[Bierhorst(2016)]{bierhorst:2016}
P.~Bierhorst.
\newblock Geometric decompositions of {B}ell polytopes with practical
  applications.
\newblock \emph{J. Phys. A: Math. Theor.}, 49\penalty0 (21):\penalty0 215301,
  2016.

\bibitem[Shafer et~al.(2011)Shafer, Shen, Vereshchagin, and Vovk]{shafer:2009}
G.~Shafer, A.~Shen, N.~Vereshchagin, and V.~Vovk.
\newblock Test martingales, {Bayes} factors and $p$-values.
\newblock \emph{Statistical Science}, 26:\penalty0 84--101, 2011.

\bibitem[Vadhan(2012)]{vadhan:2012}
S.~P. Vadhan.
\newblock \emph{Pseudorandomness}, volume~7 of \emph{Foundations and Trends in
  Theoretical Computer Science}.
\newblock 2012.

\bibitem[Ma et~al.(2012)Ma, Zhang, and Tan]{ma:2012}
X.~Ma, Z.~Zhang, and X.~Tan.
\newblock Explicit combinatorial design.
\newblock arXiv:1109.6147v2 [math.CO], 2012.

\bibitem[K\"onig and Terhal(2008)]{konig:2008}
R.~K\"onig and B.~Terhal.
\newblock The bounded-storage model in the presence of a quantum adversary.
\newblock \emph{IEEE T. Inform. Theory}, 54:\penalty0 749--62, 2008.

\bibitem[Clauser et~al.(1969)Clauser, Horne, Shimony, and Holt]{CHSH}
J.~Clauser, A.~Horne, A.~Shimony, and R.~Holt.
\newblock Proposed experiment to test local hidden-variable theories.
\newblock \emph{Phys. Rev. Lett.}, 23\penalty0 (15):\penalty0 880--884, 1969.

\bibitem[Nieto-Silleras et~al.(2016)Nieto-Silleras, Bamps, Silman, and
  Pironio]{nieto:2016}
O.~Nieto-Silleras, C.~Bamps, J.~Silman, and S.~Pironio.
\newblock Device-independent randomness generation from several {B}ell
  estimators.
\newblock arXiv:1611.00352 [quant-ph], 2016.

\end{thebibliography}

\end{document}
